\newcommand{\be}{\begin{equation}}
\newcommand{\ee}{\end{equation}}
\newcommand{\lel}{\pl = \pl}
\newcommand{\pl}{\hspace{.1cm}}
\DeclareMathOperator{\flip}{flip}
\def\Id{{\mathbb I}}
\def\H{{\cal H}}
\def\Cx{{\mathbb C}}     
\def\norm #1{\Vert #1\Vert}
\mathchardef\ree="023C \mathchardef\imm="023D  
\newcommand{\la}{\lambda}
\def\BB{{\mathcal B}}
\def\HH{{\mathcal H}}
\def\AA{{\mathcal A}}
\def\CC{{\mathcal C}}
\def\FF{{\mathbb F}}
\def\omin{{\otimes_{min}}}
\def\omax{{\otimes_{max}}}
\def\ocmax{{\otimes_{cmax}}}
\def\star{{*}}
\newtheorem{thm}{Theorem}
\newtheorem{lem}[thm]{Lemma}
\newtheorem{prop}[thm]{Proposition}
\newtheorem{cor}[thm]{Corollary}
\newtheorem{defn}[thm]{Definition}
\newtheorem{remark}[thm]{Remark}
\title{Connes' embedding problem and Tsirelson's problem}
\author[1]{M. Junge}
\author[2]{M. Navascues}
\author[1]{C. Palazuelos}
\author[2]{\authorcr D. Perez-Garcia}
\author[3]{V. B. Scholz}
\author[3]{R. F. Werner}
\affil[1]{\small Department of Mathematics, University of Illinois at
Urbana-Champaign, Illinois 61801-2975, USA}
\affil[2]{\small Departamento de An\'{a}lisis Matem\'{a}tico and IMI,
Universidad Complutense de Madrid, 28040, Madrid, Spain}
\affil[3]{\small Institut f\"{u}r Theoretische Physik, Leibniz Universit\"{a}t Hannover, Appelstr. 2, 30167 Hannover, Germany}
\begin{document}
\maketitle
\abstract{We show that Tsirelson's problem concerning the set of quantum correlations and Connes' embedding problem on finite approximations in von Neumann algebras (known to be equivalent to Kirchberg's QWEP conjecture) are essentially equivalent. Specifically, Tsirelson's problem asks whether the set of bipartite quantum correlations generated between tensor product separated systems is the same as the set of correlations between commuting C*-algebras. Connes' embedding problem asks whether any separable II$_1$ factor is a subfactor of the ultrapower of the hyperfinite II$_1$ factor. We show that an affirmative answer to Connes' question implies a positive answer to Tsirelson's. Conversely, a positve answer to a matrix valued version of Tsirelson's problem implies a positive one to Connes' problem.}

\newpage

\section{Introduction}

In non-relativistic Quantum Mechanics, measurements conducted on a quantum system by two distant observers are usually modeled by operators acting on a tensor product of two Hilbert spaces. Each factor corresponds to one observer, and the action of a measurement operator is assumed to be non-trivial only on its associated party's space. In contrast, in Algebraic Quantum Field Theory (AQFT \cite{haag}), local observables are represented by operators acting on a joint Hilbert space, and the independence condition reduces to demanding that operators corresponding to different parties commute.

The problem of Tsirelson is to decide whether these two mathematical models give rise to the same set of probability distributions. In other words, is it possible to represent all bipartite probability distribution originating from commuting sets of observables by using observables of tensor product form? The problem originated in a premature claim (positive answer to the question) in \cite{tsirelson}, for which the authors of \cite{Navascues:2007lq} demanded a proof. Tsirelson then posted it on the open problems site at Braunschweig \cite{B.S.Tsirelson:2006fk}.

A negative answer to this question would allow, in principle, to demonstrate experimentally that finite dimensional quantum models do not suffice to describe all bipartite correlations. On the other hand, a positive answer amounts to saying that the powerful numerical algorithms to limit the commutative set of correlations \cite{Navascues:2007lq}, \cite{Navascues:2008} produce not just upper bounds to quantum correlations, but best upper bounds.

Though its original motivation stems from physical considerations, Tsirelson's problem is closely related to finite approximability in operator algebras \cite{ScholzWerner}. The most prominent question in this field is the Connes' embedding problem for von-Neumann algebras (see \cite{MR2072092,MR2039135,Pisier_book} for nice reviews about it). It asks whether any separable II$_1$-factor is a subfactor of the ultrapower $R^\omega$ of the hyperfinite II$_1$-factor $R$. This problem, casually raised by Connes, has many equivalent formulations. One of them, whether the predual of any separable von Neumann algebra is finitely representable in the trace class $S_1$, is related to the possible extension of local Banach space theory to its non-commutative relative: operator spaces. The one we will use here is known as Kirchberg's QWEP conjecture and it asks whether all C*-algebras are quotients of C*-algebras with the weak expectation property. This can be read also as the existence of a unique C*-algebra norm in the tensor product $C^*(\mathbb{F}_n)\otimes C^*(\mathbb{F}_n)$. As shown for instance in the review \cite{MR2072092}, both a positive and negative solution to Connes' problem would have deep implications:

A positive solution would lead to new results concerning invariant subspaces. It would shed new light on the conditions under which the semigroup ${\rm Ext}(A)$ of a C*-algebra $A$ is indeed a group. It would also show that all countable discrete groups are hyperlinear, refuting the famous conjecture of Gromov that ``any statement that holds for all countable discrete groups is either trivial or false''. A negative solution would also have nice applications. For instance in free probability, where it would imply that the two possible definitions of free entropy do not coincide.

In this paper, we show that a positive answer to Connes' embedding problem would also imply an affirmative answer to Tsirelson's problem. Furthermore, we will also show that the converse also holds in some sense. More precisely, we will show the following theorem.

\begin{thm}\label{FirstTheorem}
   If Kirchberg's QWEP conjecture holds, then every probability distribution between two parties which can be represented using commuting sets of observables could also be represented by observables of tensor product form. Conversely, if Tsirelson's problem has a positive solution, also in the case of matrix valued coefficients (made clear below), then the QWEP conjecture, and hence Connes' embedding problem, is also true.
\end{thm}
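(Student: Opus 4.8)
The plan is to recast both questions as statements about the two standard C*-tensor norms on free-group C*-algebras, using a dictionary between bipartite correlations and states on tensor products. First I would fix the encoding of measurements. After passing to the GNS/Stinespring dilation one may assume the measurements are projective with finitely many outcomes; a $k$-outcome projective measurement $\{P_1,\dots,P_k\}$ is the spectral resolution of the unitary $u=\sum_{j}\omega^{j}P_{j}$ with $\omega=e^{2\pi i/k}$, which satisfies $u^{k}=\idty$ and hence generates a representation of $C^{*}(\ZZ_{k})$. Thus Alice's $r$ settings generate a quotient of $C^{*}(\FF_{r})$, and likewise for Bob. Under this dictionary, correlations of tensor-product form are precisely the states of $C^{*}(\FF_{n})\omin C^{*}(\FF_{m})$ evaluated on products of generators (product vectors factor through the spatial tensor product), while commuting correlations are the states of $C^{*}(\FF_{n})\omax C^{*}(\FF_{m})$ (commuting representations on a common Hilbert space are exactly representations of the maximal tensor product).

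For the forward implication I would invoke Kirchberg's characterization of QWEP as the equality $C^{*}(\FF_{n})\omin C^{*}(\FF_{n})=C^{*}(\FF_{n})\omax C^{*}(\FF_{n})$. Granting QWEP, the minimal and maximal state spaces coincide, so every commuting correlation is realized by a tensor-product one; modulo the usual closure caveat this gives the set equality asserted by Tsirelson's problem. The converse is the harder half, and here I would first argue that a positive answer to the \emph{scalar} problem cannot suffice: scalar correlations only record the numbers $\tr(\rho\,E^{x}_{a}F^{y}_{b})$, which pin down states but not the full C*-norm that separates $\omin$ from $\omax$. The matrix-valued version repairs exactly this, since the norm of a general element $z=\sum_{i}a_{i}\otimes b_{i}$ is governed by $\|z\xi\|^{2}=\langle\xi,z^{*}z\,\xi\rangle$, and expanding $z^{*}z$ turns this expectation into a correlation of a matrix whose entries are products of the generators, i.e.\ a matrix-valued correlation.

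For the converse I would then suppose, for contradiction, that $\|z\|_{\max}>\|z\|_{\min}$ for some $z\in C^{*}(\FF_{n})\otimes C^{*}(\FF_{n})$; a Hahn--Banach separation yields a matrix state coming from a commuting representation whose value on $z$ exceeds what any product representation can achieve, hence a matrix-valued commuting correlation that is not of tensor-product form, contradicting the hypothesis. Equality of the two norms on $C^{*}(\FF_{n})\otimes C^{*}(\FF_{n})$ is Kirchberg's condition, so QWEP and Connes' embedding problem follow. I expect the main obstacle to be precisely this bookkeeping in the converse: making matrix-valued correlations detect the complete order structure (and therefore the C*-norm) rather than only scalar states, and verifying that the finite-order-unitary encoding captures arbitrary finite-outcome POVMs faithfully. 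A secondary but genuine difficulty — the reason the theorem claims the problems are only essentially equivalent — is the treatment of limits: one must close the tensor-product correlation set and handle the ultrafilter/ultrapower $\omega$ carefully so that the two correlation sets match on the nose.
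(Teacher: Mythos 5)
Your overall skeleton---encoding measurements into group-type C*-algebras, translating correlations into states on $\omin$/$\omax$ tensor products, using Kirchberg's reformulation of QWEP for the forward direction, and detecting a norm gap by matrix-valued data for the converse---is the same as the paper's. But both halves have a genuine gap at the same spot: you conflate $C^*(\FF_N)$ with the algebra actually generated by $N$ projective $K$-outcome measurements, which is the \emph{quotient} $C^*(\ZZ_K^{\ast N})=\star_{x=1}^N\ell_\infty^K$. QWEP is the equality $C^*(\FF_N)\omin C^*(\FF_N)=C^*(\FF_N)\omax C^*(\FF_N)$, and since the minimal tensor norm does not pass to quotients (this is precisely a failure of exactness), that equality does not descend for free to $\star_{x=1}^N\ell_\infty^K\otimes\star_{x=1}^N\ell_\infty^K$, which is where correlations actually live. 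Concretely, in your forward direction: pull a commuting correlation back to a state on $C^*(\FF_N)\omax C^*(\FF_N)$, invoke QWEP, and approximate by vector states in product representations $\pi_A\otimes\pi_B$; the approximating numbers are $\langle\xi,\pi_A\bigl(\tfrac{1}{K}\sum_j e^{-2\pi i ja/K}\gamma_x^j\bigr)\otimes\pi_B(\cdots)\xi\rangle$, where $\pi_A(\gamma_x)$ is an \emph{arbitrary} unitary, not one of order $K$; these operators are not positive, so the approximants are not quantum behaviors and you never land in the (closure of the) tensor set $Q$. The paper bridges exactly this gap in Lemma \ref{freegroup}: QWEP implies $\min=\max$ on $\star_{x=1}^N\ell_\infty^K\otimes\star_{x=1}^N\ell_\infty^K$ because $\star_{x=1}^N\ell_\infty^K$ has the LLP (Pisier) and, by Kirchberg, QWEP forces $A\omin A=A\omax A$ for LLP algebras; the reverse transfer back to $C^*(\FF_N)=\ast_N C(\mathbb{T})$ needs a ucp approximation of $C(\mathbb{T})$ through $\ell_\infty^{m_\lambda}$ plus Boca's extension of ucp maps to free products. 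Neither ingredient has an elementary substitute (note $\ZZ_K^{\ast N}$ has torsion, hence is not a subgroup of any free group, so the subalgebra/conditional-expectation route is also closed).

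The converse has a second, independent gap. You propose to expand $\langle\xi,z^*z\,\xi\rangle$ and read off ``a correlation of a matrix whose entries are products of the generators,'' but matrix-valued correlations record only the degree-$(1,1)$ moments $(\xi_i,E^x_aF^y_b\xi_j)$; words of length at least two in the measurement operators are not correlation data. Consequently, the Hahn--Banach functional witnessing $\|z\|_{\max}>\|z\|_{\min}$ need not restrict, on the correlation coordinates, to anything outside the tensor set: the commuting state and some min state could agree on all correlations yet differ on $z$, so no contradiction is reached. What closes this hole in the paper is Pisier's trick (Lemma \ref{tsma}): the operator system spanned by the correlation coordinates contains unitaries that \emph{generate} $\star_{x=1}^N\ell_\infty^K\omin\star_{x=1}^N\ell_\infty^K$ as a C*-algebra; the hypothesized complete order equality gives a ucp map into the $\omax$ algebra which (after Arveson extension) sends these generating unitaries to unitaries, hence is a $\ast$-homomorphism by the multiplicative-domain argument, forcing $\max\le\min$ on everything. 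This is exactly the step you flagged as ``the main obstacle'' but did not supply, and it is also why the matrix-valued (complete) hypothesis is what gets used, rather than the heuristic $z^*z$ expansion. A smaller point: your ``WLOG projective via Stinespring'' is itself nontrivial in the commuting model, since the dilations of Alice's and Bob's POVMs must still commute; the paper obtains this as a consequence of Proposition \ref{embedNSGfree} (Remark \ref{vonNeumann}), whose proof requires Kasparov's dilation theorem and the flip construction, so it should not be waved through if you want the POVM version of the statement.
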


The structure of this paper is as follows: in Sections \ref{Tsir_prob} and \ref{secqwep} we will state precisely both mathematical problems. In Section \ref{SecNSG} we will introduce essential operator space notation that we will use in the Section \ref{Sec_equiv} to prove Theorem \ref{FirstTheorem}. Along the way, we will obtain an intermediate result that is important in its own right: namely, that Tsirelson's problem is independent of whether we restrict measurements to be projective or general Positive Operator Valued Measures (POVMs).

\section{Tsirelson's problem}
\label{Tsir_prob}

As taught in standard Quantum Mechanics textbooks \cite{tanoudji_book}, a measurement $z$ with $K$ possible outcomes is described by a collection of projector operators $\{E^z_c:c=1,...,K\}$ acting over a given Hilbert space $\H$, and such that

\begin{enumerate}

\item
$E^z_cE^z_{c'}=E^z_c\delta_{cc'}$ (Orthogonality),
\item
$\sum_{c=1}^nE^z_c=\Id$ (Completeness).

\end{enumerate}

\noindent A physical state $\omega$ is mathematically represented by a positive linear functional $\omega:B(\H)\to \Cx$ satisfying the normalization condition $\omega(\Id)=1$. The probability of obtaining an outcome $c$ when measuring property $z$ on a quantum system in the state $\omega$ is given by the expression

\be
p(c|z)=\omega(E^z_c).
\ee

\noindent Note that these probabilities are normalized, since

\be
\sum_{c=1}^Kp(c|z)=\sum_{c=1}^K\omega(E^z_c)=\omega(\sum_{c=1}^KE^z_c)=\omega(\Id)=1.
\ee

The above is the standard description of the measurement process, and it is common to all present formulations of Quantum Mechanics as long as just one observer is involved in the quantum experiment. In the bipartite scenario, though, there are different options for identifying the ``parts'' of the system, which are supposedly under control of different characters, say Alice and Bob.

In non-relativistic Quantum Mechanics, measurement operators associated to different parties are assumed to act over different Hilbert spaces. More specifically, if we call the observers Alice and Bob, then Alice's (Bob's) measurement of property $x$ ($y$) is linked to the set of operators\footnote{For simplicity we will assume that Alice and Bob's measurements all have $K$ possible outcomes.} $\{E^x_a:a=1,...,K\}\subset B(\H_A)$ ($\{E^y_b:b=1,...,K\}\subset B(\H_B)$), which are required to satisfy orthogonality and completeness relations on the space $\H_A$ $(\H_B)$.

States in this scenario are normalized positive functionals of $B(\H_A\otimes\H_B)$, and the probability that Alice and Bob respectively observe the outcomes $a,b$ when they perform measurements $x,y$ is equal to

\be
p(a,b|x,y)=\omega(E^x_a\otimes E^y_b).
\label{tensor}
\ee

Fix $K$ and assume that Alice and Bob are each able to interact with their system in $N$ possible different ways, and so the index $x$ and $y$ run from 1 to $N$. Following Tsirelson's notation \cite{tsirelson}, any set of probabilities $\{p(a,b|x,y):a,b=1,...,K;x,y=1,...,N\}$ will be called \emph{behavior}. We will define $Q$ as the set of all behaviors for which there exist operators $\{E^x_a,E^y_b\}$ satisfying conditions 1,2 and a state $\omega$ such that equation (\ref{tensor}) holds. Sometimes we will refer to $Q$ as the \emph{tensor set of quantum correlations}.

In AQFT, i.e., in relativistic quantum field theory, bipartite correlations are described in a prima facie more general way:  This time, measurement operators associated to Alice and Bob act over the same Hilbert space $\H$, and measurement operators acting on different sites commute with each other \cite{AQFT}. In short, Alice's and Bob's operators must satisfy:

\begin{enumerate}
\item
$E^z_cE^z_{c'}=E^z_c\delta_{cc'}$ (Orthogonality),
\item
$\sum_{c=1}^KE^z_c=\Id$ (Completeness),
\item
$[E^x_a,E^y_b]=0$, for all $a,b=1,...,K;x,y=1,...,N$ (Microcausality).
\end{enumerate}

\noindent For any state $\omega$, the probability that Alice and Bob observe the results $a,b$ when they perform measurements $x,y$ is given by

\be
p(a,b|x,y)=\omega(E^x_a\cdot E^y_b).
\label{commute}
\ee

In analogy with $Q$, we will denote by $Q'$ the set of all behaviors of the form (\ref{commute}). Note that, since $[E^x_a\otimes \Id_B,\Id_A\otimes E^y_b]=0$, $Q$ is contained in $Q'$. We will call $Q'$ the \emph{commutative set of quantum correlations}.

Tsirelson's problem consists in determining whether the inclusion $Q\subset Q'$ is strict, i.e., whether the sets of correlations predicted by the two ways of formalizing ``subsystems'' might differ.

To establish a connection with previous literature on the subject, note that the sets $Q, Q'$ do not grow when we just demand positivity rather than orthogonality for Alice and Bob's measurement operators. Indeed (see Remark \ref{vonNeumann}), if we relax the orthogonality condition to

\begin{enumerate}
\item
$E^z_c\geq 0$ (Positivity)
\end{enumerate}
in the definitions of $Q,Q'$, then the resulting sets of correlations are again $Q$ and $Q'$. Actually, these last definitions for $Q$ and $Q'$ appear in the original formulation of Tsirelson's problem \cite{ScholzWerner}.

We note that the possible difference between $Q$ and $Q'$ is not so much a consequence of relativistic space-time physics, but of the necessity to handle infinitely many degrees of freedom in AQFT. Indeed, one usually imposes axioms to tame the number of local degrees of freedom, e.g., by nuclearity constraints \cite{nukes}. This implies that local algebras of strictly separated space-time regions are contained in Hilbert space tensor factors (``split property'' \cite{DoLo}). Since the separation can be arbitrarily small, this forces the local algebras to be hyperfinite \cite{BuDaFre}, and from the discussion below it is clear that this excludes all correlations which are stronger than tensor products allow. This conclusion holds even in the scenario, when especially strong, state-independent violations of Bell inequalities have been demonstrated \cite{SumWer}, namely when space-time regions touch.

One may also strengthen the notion of ``local subsystems'' by demanding that in a local labs scenario Alice and Bob should not only be able to choose measurements independently of each other, but should also be able to prepare local states as needed  (see also\cite{FloSum}). Again, this implies the split property \cite{LocPrep}, and excludes $Q'$.

Of course, a negative answer to Tsirelson's problem would mean that, if an appropriate correlation expression could be constructed and implemented in the laboratory, the split property could be refuted experimentally.

\section{Connes' embedding problem}\label{secqwep}

We will describe the problem in terms of the universal C*-algebra of the free group with $n$ generators $C^*(\mathbb{F}_n)$. That this is indeed a reformulation of Connes' embedding problem is a deep result of Kirchberg \cite{Kirchberg,MR2072092}. The free group $\FF_n$ with $n$ generators $g_1,\dots,g_n$ is the group formed by all words written in a unique way as the product of the generators together with their inverses, using the only cancelation rule $g_ig_i^{-1}=e$; $e$ being the empty word.

The universal C*-algebra of $\FF_n$, $C^*(\FF_n)$, is the completion of the group ring of $\FF_n$ with respect to $\norm{x} = \sup_\pi \norm{x}_{\BB(\HH)}$, where the supremum is taken over all unitary representation of the free group into some $\BB(\HH)$. It has the universal property that any *-homomorphism from the group ring into some $\BB(\HH)$ extends to a *-representation of $C^*(\FF_n)$.

Given two C*-algebras $\AA$, $\CC$, there are two canonical ways to turn their algebraic tensor product $\AA \otimes \CC$ into a C*-algebra. Consider two *-representations $\pi_A : \AA \to \BB(\HH_A)$, $\pi_B : \CC \to \BB(\HH_B)$, and define the norm of some tensor $x \in \AA \otimes \CC$ as $\norm{\pi_A \otimes \pi_B (x)}_{\BB(\HH_A \otimes \HH_B)}$. The supremum over all such pairs of representations is called the minimal C*-algebraic tensor norm and will be denoted by $\omin$. It can be proven that it is indeed the smallest possible tensor norm for C*-algebras \cite{Takesaki}. We denote by $\AA \omin \CC$ the completion of $\AA \otimes \CC$ with respect to this norm.

To construct the maximal tensor norm, we do not only consider pairs of *-representations $\pi_A,\pi_B$ into the bounded operators on a pair of different Hilbert spaces, but also all pairs of *-representations into a single $\BB(\HH)$, but with the restriction that the range of $\pi_A$ commutes with the range of $\pi_B$. The resulting C*-structure, obtained after completion of $\AA \otimes \CC$, is called the maximal C*-tensor product and will be denoted by $\AA \omax \CC$.  It is the largest possible tensor norm for C*-algebras \cite{Takesaki}.  Note that if either $\AA$ or $\CC$ is finite dimensional, then the two tensor norms induce the same C*-algebraic structure. This is also the defining property of \emph{nuclear} C*-algebras \cite{Takesaki,Pisier_book}. Thus, if either Alice or Bob has a quantum system which is described by a nuclear C*-algebra, then Tsirelson's problem is trivially true.

Kirchberg's QWEP conjecture, equivalent to Connes' embedding problem, states now that there is also only one possible C*-norm on the tensor product of $C^*(\FF_n)$ with itself, meaning that we have for all $n$
\[C^*(\FF_n) \omin C^*(\FF_n) \,=\, C^*(\FF_n) \omax C^*(\FF_n) \,.\]

\section{Non-signalling operator systems and tensor norms}\label{SecNSG}

In order to show the equivalence between Tsirelson's problem and Connes' embedding problem, we first reformulate the setting of Tsirelson
without referring to some particular Hilbert space. We will start by considering the set of ``marginals'' $p(a|x)$ of Alice's
(or Bob's) possible measurements. By definition, this is the set of $N \times K$ matrices with positive entries such that all columns sum up to one. Following \cite{JP} we will define the space $NSG(N,K)$ to be the complex span of this subset of $\mathbb{C}^{NK}$. To be precise, we define the vector space $NSG(N,K)\subset \mathbb{C}^{NK}$ as the space of all matrices $(\la_{x,a})$ such that there exists $\la\in \mathbb{C}$ such that
\[ \sum_{a} \la_{x,a} \lel \la  \quad \mbox{holds for all $x$.}\]
Next, we have to ensure that the probabilities are represented by outcomes of quantum measurements. In order to understand the duality implicitly involved, we have to recall some facts about operator systems. An operator system $X$ is a subspace $X\subset B(\HH)$ with is closed under taking the adjoint and containing the unit \cite{Paulsen_book}. Then the generalized state space is defined as
\[ S(X;B(\HH))\lel \{u:X\to B(\HH): u \mbox{ ucp} \} \quad, \quad
S(X) \lel \bigcup_{H} S(X;B(\HH))
.\]
Here and in the following ucp means unital completely positive. Note that for finite dimensional $\HH$ we find indeed
\[ S(X;B(\ell_2^n))\subset M_n(X^*) \] as a vector space. Following standard operator space terminology
this means we are identifying the matrix structure of the dual space $X^*$. In our situation, it is useful to reverse this
operation. As a vector space we may identify the dual $NSG(N,K)^*$ of $NSG(N,K)$ with the quotient $\mathbb{C}^{NK}/NSG(N,K)^{\perp}$. The orthogonal space $NSG(N,K)^{\perp}$ is easy to calculate and it is given by
\[ NSG(N,K)^{\perp}\lel \{((\overbrace{\la_1,...,\la_1}^K),(\overbrace{\la_2,\cdots \la_2}^K),\cdots, (\overbrace{\la_N,...,\la_N}^K))\pl:\pl \sum_j \la_j=0\} \pl .\]
Then we define the state space
\begin{equation}\label{nsg}
S_\HH(N,K) \lel \{ (\Phi_x)_{x=1}^N \pl:\pl
\Phi_x:\ell_{\infty}^K\to B(\HH) \mbox{ ucp} \} \pl .
\end{equation}
Each such ucp map $\Phi_i$ defines a positive operator valued measure on the Hilbert space $\HH$ and hence a valid quantum measurement. A sequence $(\Phi_x)_{x=1}^N$ then defines a set of $N$ measurements with $K$ outcomes each. Note that, given such a sequence, we may define the linear map $u:NSG(N,K)^*\to B(\HH)$ given by
\[ u\left( (\la_{x,a}) \right) \lel \sum_{x,a} \la_{x,a} \Phi_x(e_a) \pl.\]
Here $e_a$ is the corresponding unit vector in $\ell_{\infty}^K=\mathbb{C}^K$. It is clear that such a $u$ is well-defined. Indeed, we have that for any $\la=(\la_{x,a})$ in $NSG(N,K)^{\perp}$
\[ u(\la) \lel \sum_{x,a} \la_x \Phi_x( e_a)
\lel \sum_x \la_x \Phi_x(1) \lel \sum_x \la_x \lel 0 \pl .\]
In the following we will use the symbol $NSG(N,K)^*$ for the (more or less) concrete operator system defined by \eqref{nsg}. More
precisely, considering the family $$J=\{(\Phi_x)_{x=1}^N| \Phi_x:\ell_{\infty}^K\rightarrow B(\HH) \text{   is  ucp for every  }x\},$$ the operator system structure defined on $NSG(N,K)^*$ is exactly the one defined by the embedding $$\eta:NSG(N,K)^*\hookrightarrow \oplus_{j\in J} B(\HH_j),$$ such that
\[ \eta\left( (\la_{x,a}) \right)(j) \lel \sum_{x,a} \la_{x,a}
\Phi_x(e_a) \quad ,\quad j \lel (\Phi_{x})_{x=1}^N \in S_\HH(N,K) \pl, \] where $\HH_j$ is the Hilbert space associated with the index $j$.


Note that in the sense of the above definition, any set of measurement devices $\{E^x_a\}$, as introduced in Section \ref{Tsir_prob}, defines indeed an element of the state space $S_\HH(N,K)$ with the identification $\Phi_x(e_a) = E^x_a$.

Up to know, we only formalized the situation for one observer. Coming back to the Tsirelson setting, we now associate one copy of $NSG(N,K)^*$ to both Alice and Bob. A measurement on the combined systems will be then an element of the algebraic tensor product $NSG(N,K)^* \otimes NSG(N,K)^*$. In order to turn this linear space into an operator system, we have to define matrix cones on the algebraic tensor product. There are two obvious choices for doing that, and they exactly reflect the two settings connected to Tsirelson's problem.

\begin{defn}
   We call an element $t \in NSG(N,K)^* \otimes NSG(N,K)^*$ of the algebraic tensor product $\omin$-positive, if
   \begin{align*}
       (\Phi_A \otimes \Phi_B)(t) \;\in\; \BB(\HH_A \otimes \HH_B)
   \end{align*}
   is a positive operator for all pairs $(\Phi_A,\Phi_B)\in S_{\HH_A}(N,K)\times S_{\HH_B}(N,K)$ of completely positive maps defined with respect to different Hilbert spaces. We denote by $NSG(N,K)^* \omin NSG(N,K)^*$ the corresponding operator system and call it the minimal tensor product.
\end{defn}It is not difficult to see that $min$ is the minimal norm in the category of operator systems (see \cite{Paulsen_book}).

\begin{defn}
  We call an element $t \in NSG(N,K)^* \otimes NSG(N,K)^*$ of the algebraic tensor product $\ocmax$-positive, if
  \begin{align*}
      (\Phi_A \cdot \Phi_B)(t) \;\in\; \BB(\HH)
  \end{align*}
  is a positive operator for all pairs $(\Phi_A,\Phi_B)\in S_\HH(N,K)\times S_\HH(N,K)$ of completely positive maps with commuting ranges. We denote the corresponding operator system by $NSG(N,K)^* \ocmax NSG(N,K)^*$ and call it the maximal commuting tensor product.
\end{defn}These definitions (for general operator systems) were introduced and nicely discussed in \cite{Paulsen}. We must note that the notation $\ocmax$ here is different from the one used in \cite{Paulsen} ($\otimes_c$). However, we decided to maintain our notation because it is more suitable in this context.

Using the mathematical objects we have defined in this section, we can state the following equivalence to Tsirelson's problem:

\begin{prop}\label{TsirelsonEquivalence}
   Tsirelson's problem has an affirmative solution if and only if
   \begin{align*}
       (NSG(N,K)^* \omin NSG(N,K)^*)_{sa} \;=\; (NSG(N,K)^* \ocmax NSG(N,K)^*)_{sa}\;
   \end{align*}
   holds isometrically for all $(N,K)$. Here $X_{sa}$ denotes the selfadjoint part of the operator system $X$.
\end{prop}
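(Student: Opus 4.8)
The plan is to prove the proposition by \emph{duality}, matching the self-adjoint order norm of each tensor-product operator system with the support function of the corresponding set of behaviors. The first step is to set up the dictionary
\[
S(NSG(N,K)^* \omin NSG(N,K)^*) \;=\; \overline{Q}, \qquad
S(NSG(N,K)^* \ocmax NSG(N,K)^*) \;=\; Q',
\]
where on the left I take the scalar state spaces of the operator systems and on the right the bipartite behaviors of Section~\ref{Tsir_prob}. One inclusion is immediate from the definitions: a pair $(\Phi_A,\Phi_B)\in S_{\HH_A}(N,K)\times S_{\HH_B}(N,K)$ together with a state $\omega$ on $\BB(\HH_A\otimes\HH_B)$ yields the functional $t\mapsto \omega\big((\Phi_A\otimes\Phi_B)(t)\big)$, which is positive on the $\omin$-cone by construction and takes the value $\omega(\Phi_x^A(e_a)\otimes\Phi_y^B(e_b))=p(a,b|x,y)$ on the generators; hence every element of $Q$ is a state of the minimal tensor product, and the analogous construction with commuting ranges on a single Hilbert space realizes every element of $Q'$ as a state of $\ocmax$.

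Granting this dictionary, the passage to the isometric statement is the standard order-theoretic fact that for a self-adjoint $t$ in an operator system with order unit $1\otimes 1$ one has $\|t\|=\sup\{|f(t)|:f\text{ a state}\}$. Therefore the self-adjoint norms of $\omin$ and $\ocmax$ agree on all $t$ precisely when the two state spaces have the same support function, and since both are closed convex sets carrying the same unit this is equivalent to their coincidence, i.e. to $\overline{Q}=Q'$. Here I would also record the free inequality: every product pair $(\Phi_A\otimes\idty,\idty\otimes\Phi_B)$ is a commuting-range pair on $\HH_A\otimes\HH_B$, so the $\ocmax$-cone is contained in the $\omin$-cone, giving $\|t\|_{\ocmax}\ge\|t\|_{\omin}$ and $Q\subseteq Q'$ automatically; the content of the proposition is the reverse direction, and it is exactly the isometric equality that forces $Q'\subseteq\overline{Q}$.

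The main obstacle is the nontrivial inclusion in the dictionary, namely that \emph{every} abstract state of the operator system is realized by genuine measurements, so that the state space is not larger than $\overline{Q}$ (resp. $Q'$). For this I would use that $NSG(N,K)^*$ is given concretely through the embedding $\eta$ of \eqref{nsg}, extend a positive functional on $NSG(N,K)^*\omin NSG(N,K)^*$ to a state on the ambient $\BB(\KK)$ by Arveson/Hahn--Banach, and then apply the Stinespring dilation to read off a pair of POVMs living on separate tensor factors together with the state $\omega$; on the $\ocmax$ side the same dilation produces POVMs with commuting ranges on a single Hilbert space. The delicate points, which I expect to be the crux, are (i) verifying that the abstract operator-system positivity defined by testing against all ucp maps corresponds \emph{exactly} to the tensor-product model for $\omin$ and to the commuting model for $\ocmax$, and (ii) handling closures: the $\omin$ state space is closed, so the construction recovers $\overline{Q}$ rather than $Q$, which is why the clean equivalence is with $\overline{Q}=Q'$, the correct formulation of Tsirelson's problem.
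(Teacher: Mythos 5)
Your proposal is correct in substance and follows essentially the same route as the paper's own (very terse) proof: real-coefficient combinations of the $e_{x,a}\otimes e_{y,b}$ exhaust the self-adjoint parts, self-adjoint elements of an operator system are normed by states, and the definitions of the $\omin$ and $\ocmax$ cones turn the two order-unit norms into support functions of the tensor-model and commuting-model behavior sets, so that isometric equality amounts to coincidence of those convex sets. Your explicit dictionary between state spaces and behaviors, and your closure discussion, are faithful expansions of what the paper compresses into ``the definitions of $NSG(N,K)^*$ and the $min$ and $cmax$ norms yield the assertion.''

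One internal step would fail as literally described, though it is repairable inside your own framework. On the $\ocmax$ side you propose to realize an abstract state by extending it to an ambient $\BB(\KK)$ (Arveson/Hahn--Banach) and then applying Stinespring; but the concrete embedding used to present the operator system retains no memory of the commutation constraint, so the extended state and its dilation produce a pair of POVMs with no reason to have commuting ranges. The honest exact-realization statement is that $NSG(N,K)^*\ocmax NSG(N,K)^*$ sits completely order-isomorphically inside $\star_{x=1}^N\ell_\infty^K \omax \star_{x=1}^N\ell_\infty^K$, which is the paper's Proposition \ref{embedNSGfree}, proved \emph{later} via Kasparov's dilation theorem, and should not be presupposed here. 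Fortunately exact realization is not needed: since the $\ocmax$ cone is by definition the dual cone of the set of commuting-model functionals, the bipolar theorem gives that the supremum over all abstract states equals the supremum over that generating set itself, i.e.\ the support function of $Q'$; likewise for $\omin$ and $Q$. Isometric equality is then equivalent to $\overline{Q}=\overline{Q'}$ (both sets being convex), which is the reading of Tsirelson's problem the paper silently adopts; note that even your cleaner formulation $\overline{Q}=Q'$ presupposes closedness of $Q'$, which itself requires a free-product/GNS compactness argument rather than being automatic. With that substitution your proof coincides with the paper's.
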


\begin{proof}
Indeed, note that, for any real coefficients $\{M_{x,y}^{a,b}\}_{x,y;a,b=1}^{N;K}$, the element $M=\sum_{x,y;a,b=1}^{N;K}M_{x,y}^{a,b}e_{x,a}\otimes e_{y,b}\in NSG(N,K)^* \otimes_\alpha NSG(N,K)^*$ is selfadjoint for both $\alpha=\min$ and $\alpha=cmax$. Thus, it is normed by states. The definitions of $NSG(N,K)^*$ and the $min$ and $cmax$ norms yield the assertion.
\end{proof}

Thus, the previous proposition says that Tsirelson's problem is equivalent to check whether two real Banach spaces coincide isometrically.

\section{Connes' embedding problem equals Tsirelson's problem}\label{Sec_equiv}

In this section, we will combine the insights from the previous sections and show that, in some sense, both problems, Tsirelson's and Connes', are indeed equivalent. More specifically, we will prove the following theorem:

\begin{thm}\label{equivalence}
   Let $C^*(\FF_N)$ be the universal C*-algebra of the free group of order $N$, and let $NSG(N,K)^*$ be the non-signaling operator system of order $(N,K)$. Then the following are equivalent.
   \begin{enumerate}
   \item $C^*(\FF_N) \omin C^*(\FF_N) \, = \, C^*(\FF_N) \omax C^*(\FF_N)$ for all $N$,
   \item $NSG(N,K)^* \omin NSG(N,K)^* \,=\, NSG(N,K)^* \ocmax NSG(N,K)^*$ completely isometrically for all $N,K$.
   \end{enumerate}
\end{thm}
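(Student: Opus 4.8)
The plan is to prove the two statements equivalent by routing through an intermediate condition, the analogue of (1) for free products of finite cyclic groups:
\begin{center}
(1$'$): $\quad C^*(\ZZ_K^{*N}) \omin C^*(\ZZ_K^{*N}) \,=\, C^*(\ZZ_K^{*N}) \omax C^*(\ZZ_K^{*N})$ for all $N$ and $K$,
\end{center}
where $\ZZ_K^{*N}$ denotes the free product of $N$ copies of $\ZZ_K$. I would establish (1$'$) $\Leftrightarrow$ (2) by an operator-system/C*-algebra bridge, and (1) $\Leftrightarrow$ (1$'$) by a purely group-C*-algebraic transfer between $\ZZ_K$ and $\ZZ$. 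The point of the detour is that a projective measurement (PVM) with $K$ outcomes is exactly a unital $*$-homomorphism $\ell_\infty^K \to \BB(\HH)$, so $N$ such PVMs assemble into a representation of $\ell_\infty^K * \cdots * \ell_\infty^K = C^*(\ZZ_K^{*N})$.

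For the bridge (1$'$) $\Leftrightarrow$ (2), I would first realize $NSG(N,K)^*$ as the generating operator subsystem of $C^*(\ZZ_K^{*N})$ spanned by the PVM elements and the unit; its operator-system structure is unchanged if one uses general POVMs rather than PVMs, since Naimark's theorem writes every POVM as a compression $V^* \pi(\cdot) V$ of a PVM and compressions preserve complete positivity (and this at all matrix levels). The core step is then to identify the two operator-system tensor products with restrictions of the two C*-norms: independent Naimark dilations on the two factors give that $NSG(N,K)^* \omin NSG(N,K)^*$ is the restriction of $C^*(\ZZ_K^{*N}) \omin C^*(\ZZ_K^{*N})$ to $NSG \otimes NSG$, while a simultaneous, commutation-preserving Naimark dilation of a pair of commuting POVMs gives the analogous statement for $\ocmax$ against $\omax$. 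The implication (1$'$) $\Rightarrow$ (2) is then immediate by restriction, and this commuting-dilation statement is exactly the advertised intermediate result that Tsirelson's problem does not depend on POVM versus projective measurements.

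For the transfer (1) $\Leftrightarrow$ (1$'$), reducing each free generator modulo $K$ gives a surjective $*$-homomorphism $Q_K \colon C^*(\FF_N) \twoheadrightarrow C^*(\ZZ_K^{*N})$, together with a ucp section $s_K$ assembled, via Boca's free-product extension theorem, from the per-factor ucp sections $\ell_\infty^K \hookrightarrow C(\Tw) = C^*(\ZZ)$ given by continuous partitions of unity peaked at the $K$-th roots of unity. Since $\omin$ and $\omax$ are functorial for ucp maps and $Q_K \circ s_K = \id$, one has $(Q_K \otimes Q_K)(s_K \otimes s_K) = \id$, and the chain $\|t\|_{\max} = \|(Q_K\otimes Q_K)(s_K\otimes s_K)(t)\|_{\max} \le \|(s_K\otimes s_K)(t)\|_{\max} = \|(s_K\otimes s_K)(t)\|_{\min} \le \|t\|_{\min}$ transports min $=$ max from $C^*(\FF_N)$ to $C^*(\ZZ_K^{*N})$, proving (1) $\Rightarrow$ (1$'$). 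For the converse I would let $K \to \infty$: the maps $s_K Q_K$ converge pointwise to the identity on the group ring (finite-order unitaries approximating the free unitaries), and running any fixed $t$ through $s_K Q_K \otimes s_K Q_K$, with min $=$ max on each $C^*(\ZZ_K^{*N})$, yields $\|t\|_{\max} \le \|t\|_{\min}$ in the limit, which is (1).

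The main obstacle, and the place where the completely isometric (matrix-valued) hypothesis is indispensable, is the hard half of the bridge, (2) $\Rightarrow$ (1$'$): agreement of $\omin$ and $\omax$ on the single-syllable generating subsystem $NSG \otimes NSG$ must be \emph{propagated} to equality on all of $C^*(\ZZ_K^{*N}) \otimes C^*(\ZZ_K^{*N})$, which also involves multi-syllable words not lying in $NSG \otimes NSG$. Here one must work at every matrix level simultaneously and invoke the universal-C*-algebra description of $\ocmax$, so that the complete order structure of the generating system controls the generated C*-structure; this is the genuinely deep step and is the operator-algebraic incarnation of ``matrix-valued Tsirelson $\Rightarrow$ Connes''. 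I expect the two trickiest technical points to be precisely this propagation and the commutation-preserving dilation of the preceding paragraph, since a naive separate dilation of two commuting POVMs need not keep the resulting PVMs commuting.
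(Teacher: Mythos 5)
Your proposal follows the same overall architecture as the paper: both routes pass through $C^*(\ZZ_K^{*N})=\star_{x=1}^N\ell_\infty^K$, so your (1)$\Leftrightarrow$(1$'$) is the paper's Lemma \ref{freegroup} and your bridge (1$'$)$\Leftrightarrow$(2) is the paper's Proposition \ref{embedNSGfree} combined with Lemma \ref{tsma}. Within this skeleton, your treatment of (1)$\Rightarrow$(1$'$) is a genuine and attractive simplification: the paper proves that direction by citing the LLP of $\star_{x=1}^N\ell_\infty^K$ and Kirchberg's theorem that QWEP forces $\omin=\omax$ on LLP algebras, whereas your ucp-section chain needs only Boca's theorem and functoriality of the two norms. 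One detail must be added for it to be complete: $Q_K\circ s_K$ is a ucp map that agrees with the identity on each free factor, and ucp maps are \emph{not} determined by their values on generators; to conclude $Q_K\circ s_K=\id$ globally you should take the per-factor sections state-preserving (e.g.\ $\int h_a=1/K$ for all $a$), so that Boca's free product sends reduced centered words to products of the letterwise images and $Q_K\circ s_K$ fixes a dense subspace. Your (1$'$)$\Rightarrow$(1) limit argument is essentially the paper's net argument with $\Psi_\lambda=v_\lambda u_\lambda$.

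The problem is that the two steps you explicitly defer are not technical residue; they are the entire content of the hard directions, and no mechanism is offered for either. First, the commutation-preserving dilation needed for $NSG(N,K)^*\ocmax NSG(N,K)^*\hookrightarrow \star_{x=1}^N\ell_\infty^K\omax\star_{x=1}^N\ell_\infty^K$: the paper's solution (Theorem \ref{Kasparov}) is that Kasparov's dilation of Alice's POVMs can be taken inside the multiplier algebra $M(K(\ell_2)\otimes A)\subset B(\ell_2)\bar{\otimes}A''$, where $A$ is the C*-algebra generated by Alice's own operators; Bob's dilation similarly lands in $B(\ell_2)\bar{\otimes}B''$, and placing the two auxiliary $B(\ell_2)$ legs in different tensor slots (the flip) makes the dilated PVMs commute \emph{because} $A''$ and $B''$ commute. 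Without localizing the dilation in the bicommutant of the original operators this step genuinely fails, as you yourself suspect. Note also that even your min-side claim and the single-system embedding $NSG(N,K)^*\hookrightarrow\star_{x=1}^N\ell_\infty^K$ already need a \emph{simultaneous} dilation of $N$ POVMs with one common corner (Boca or Kasparov); independent per-POVM Naimark dilations live on different spaces and do not assemble into one compression of a representation of the free product.

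Second, for (2)$\Rightarrow$(1$'$), "working at every matrix level and invoking the universal description of $\ocmax$" is not an argument: the complete order structure of an operator system does not in general control the C*-norms on the algebra it generates. The missing idea is Pisier's unitary trick (the paper's Lemma \ref{tsma}): $NSG(N,K)^*$ is \emph{spanned} by unitaries, namely $u_x^j=\sum_a e^{2\pi i ja/K}\pi_x(e_a)$, and these unitaries generate $\star_{x=1}^N\ell_\infty^K$ as a C*-algebra. Hypothesis (2) composed with the embeddings of Proposition \ref{embedNSGfree} gives a ucp map from the subsystem $NSG\omin NSG$ of $\star_{x=1}^N\ell_\infty^K\omin\star_{x=1}^N\ell_\infty^K$ into $\star_{x=1}^N\ell_\infty^K\omax\star_{x=1}^N\ell_\infty^K$ carrying these spanning unitaries to unitaries; after an Arveson extension, Choi's multiplicative-domain argument forces the extension to be a *-homomorphism on the C*-algebra the unitaries generate, i.e.\ on the whole minimal tensor product, whence $\|\cdot\|_{\max}\le\|\cdot\|_{\min}$ and equality. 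That is precisely the step converting matrix-valued Tsirelson into QWEP, and it is absent from your proposal.
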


Then, the first implication stated in Theorem \ref{FirstTheorem} is immediately obtained from Theorem \ref{equivalence}. Indeed, we have:

\begin{cor}
If Connes' embedding problem is true, then Tsirelson's problem has an affirmative answer.
\end{cor}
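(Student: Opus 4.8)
The plan is to chain together the two preparatory results, Proposition \ref{TsirelsonEquivalence} and Theorem \ref{equivalence}, using only the elementary observation that a complete isometry between operator systems restricts to an isometry on their self-adjoint parts. The corollary is really a bookkeeping statement: all of the substantive work has already been done, and what remains is to line up the hypotheses and conclusions of the earlier results.

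First I would invoke Kirchberg's theorem, recalled in Section \ref{secqwep}, to rewrite the hypothesis. An affirmative answer to Connes' embedding problem is equivalent to the QWEP conjecture, which is by definition the assertion that $C^*(\FF_N) \omin C^*(\FF_N) = C^*(\FF_N) \omax C^*(\FF_N)$ for all $N$. This is precisely condition (1) of Theorem \ref{equivalence}. Next I would apply Theorem \ref{equivalence} to pass from (1) to (2), obtaining that $NSG(N,K)^* \omin NSG(N,K)^* = NSG(N,K)^* \ocmax NSG(N,K)^*$ holds completely isometrically for all $N,K$.

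Finally, a complete isometry of operator systems is in particular a unital positive isometric isomorphism, so it restricts to an isometric identification of the self-adjoint parts. Thus condition (2) yields $(NSG(N,K)^* \omin NSG(N,K)^*)_{sa} = (NSG(N,K)^* \ocmax NSG(N,K)^*)_{sa}$ isometrically for all $(N,K)$, which is exactly the criterion of Proposition \ref{TsirelsonEquivalence} characterizing an affirmative solution to Tsirelson's problem. Reading the equivalence in Proposition \ref{TsirelsonEquivalence} in the appropriate direction then gives the conclusion.

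There is no genuine obstacle in this direction. The only point worth flagging is the passage from the complete isometry supplied by (2) to the mere isometry of self-adjoint parts required by Proposition \ref{TsirelsonEquivalence}; this is immediate, and indeed it shows that the full strength of the complete isometry is more than is needed here, the converse direction being where that extra strength would matter.
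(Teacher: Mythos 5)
Your proof is correct and follows essentially the same route as the paper's own argument: invoke Kirchberg's equivalence to translate Connes' problem into condition (1) of Theorem \ref{equivalence}, obtain the complete isometry of condition (2), and weaken it to the isometry (of self-adjoint parts) required by Proposition \ref{TsirelsonEquivalence}. The paper's proof is merely a terser version of yours, saying only ``completely isometric, in particular isometric'' and leaving the appeal to Proposition \ref{TsirelsonEquivalence} implicit.
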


\begin{proof}
According to Theorem \ref{equivalence}, a positive solution of Connes' problem implies that $NSG(N,K)^* \omin NSG(N,K)^*$ and $NSG(N,K)^* \ocmax NSG(N,K)^*$ are completely isometric. In particular, they are isometric.
\end{proof}

In order to prove Theorem \ref{equivalence} we need to introduce some notation. We refer to \cite[Chapter 8]{Pisier_book} for a more detailed explanation. Given a (discrete) group $G$, we consider the left regular representation of $G$, $\lambda_G:G\rightarrow B(\ell_2(G))$, defined by $\lambda_G(s)(\delta_t)=\delta_{st}$. Then, we can define the reduced C$^*$-algebra of $G$, $C^*_\lambda(G)$, as the C*-algebra generated by $\lambda_G(G)$. That is $C^*_\lambda(G)=C^*(\lambda_G(G))\subset B(\ell_2(G))$. In this work, we will be interested in the particular case $G=\mathbb{Z}_n$. Since it is an abelian group, we have that $C^*_\lambda(\mathbb{Z}_n)=\ell_\infty(\widehat{\mathbb{Z}}_n)=\ell_\infty(\mathbb{Z}_n)=\ell_\infty^n$, where the identification between $\widehat{\mathbb{Z}}_n$ and $\mathbb{Z}_n$ is via the Fourier Transform. Since abelian groups are amenable, we have $C^*_\lambda(G)=C^*(G)$ (see Section \ref{secqwep} for the definition of the universal C*-algebra of $G$, $C^*(G)$). Given two groups $G_1, G_2$ (resp. $C^*$-algebras $\mathcal A_1, \mathcal A_2$), we will denote by $G_1* G_2$ (resp. $\mathcal A_1*\mathcal A_2$) the free product group (resp. $C^*$-algebra). It is well known that $C^*(G_1*G_2)=C^*(G_1)*C^*(G_2)$. Actually, this identification is true for an arbitrary family of groups $(G_i)_{i\in I}$. Note that we have

$$C^*(\star_{i\in I} \mathbb{Z}_n)=\star_{i\in I}C^*(\mathbb{Z}_n)= \star_{i\in I}\ell_\infty^ n.$$


We will start by stating Connes' embedding problem in terms of $\star_{i\in I}\ell_\infty^ n$. Although the result is certainly known to experts, we will prove it for the sake of completeness.

\begin{lem}\label{freegroup}
$C^*(\FF_N)\omin  C^*(\FF_N) =  C^*(\FF_N)\omax  C^*(\FF_N)$ for all $N$ iff for all $N,K \in \mathbb{N}$ the minimal and the maximal tensor product coincide on the tensor product of $\star_{x=1}^N\ell_\infty^K$ with itself:
\begin{align}\label{eqNM}
  \star_{x=1}^N\ell_\infty^K \omin \star_{x=1}^N\ell_\infty^K \; = \; \star_{x=1}^N\ell_\infty^K \omax \star_{x=1}^N\ell_\infty^K \;.
\end{align}
\end{lem}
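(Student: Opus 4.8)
The plan is to prove Lemma~\ref{freegroup} by exploiting the fact that both $C^*(\FF_N)$ and $\star_{x=1}^N \ell_\infty^K$ carry universal properties, so that a single C*-norm on one side can be transported to the other. The key structural observation is that $\ell_\infty^K = C^*(\ZZ_K)$, as already noted in the excerpt, and hence $\star_{x=1}^N \ell_\infty^K = C^*(\star_{x=1}^N \ZZ_K) = C^*(\ZZ_K^{*N})$, the universal C*-algebra of the free product of $N$ copies of the cyclic group $\ZZ_K$. Since equality of the minimal and maximal tensor norms on $\AA \otimes \AA$ is, by Kirchberg's theory, equivalent to $\AA$ having the local lifting property together with $\AA \otimes \AA$ having a unique C*-norm (the relevant content being QWEP-type statements), the natural route is to relate the free group $\FF_N$ to the free products $\ZZ_K^{*N}$ group-theoretically and push this through the functoriality of $C^*(\cdot)$ and of $\omin, \omax$.

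First I would establish the forward direction, assuming a unique C*-norm on $C^*(\FF_N) \otimes C^*(\FF_N)$ for all $N$. The point is that $\ZZ_K$ is a quotient of $\ZZ \cong \FF_1$, so $\ZZ_K^{*N}$ is a quotient of $\FF_N = \ZZ^{*N}$, inducing a surjective *-homomorphism $q : C^*(\FF_N) \to C^*(\ZZ_K^{*N}) = \star_{x=1}^N \ell_\infty^K$. I would then use the general principle that both tensor norms are functorial: a quotient map $q$ induces quotient maps $q \otimes q$ on both the $\omin$ and $\omax$ completions. The crucial input here is Kirchberg's result that $C^*(\FF_N)$ is a universal object whose unique-tensor-norm property is inherited by quotients in the appropriate sense; concretely, since $\min \le \max$ always and the two agree on $C^*(\FF_N) \otimes C^*(\FF_N)$, the induced norms on the quotient must also agree, giving \eqref{eqNM}. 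I would be careful to phrase this using the max tensor norm's projectivity (it behaves well under quotients) together with the fact that $\min$ on the quotient is dominated by the image of $\min$ upstairs.

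For the converse, assuming \eqref{eqNM} holds for all $N,K$, I would run the reverse transport by realizing $C^*(\FF_N)$ as a limit, or a suitable C*-subalgebra/inductive object, built from the $C^*(\ZZ_K^{*N})$. The natural device is that as $K \to \infty$ the groups $\ZZ_K^{*N}$ approximate $\FF_N = \ZZ^{*N}$: each generator $g_x$ of $\FF_N$ is a unitary, and finite-dimensional spectral truncations give unitaries of finite order $K$, so that $C^*(\FF_N)$ embeds into (or is locally approximated by) an ultraproduct or inductive limit of the algebras $\star_{x=1}^N \ell_\infty^K$. Since the equality of $\min$ and $\max$ passes to inductive limits and to the relevant subobjects (both tensor norms are continuous under these constructions), unique norm on each $\star_{x=1}^N \ell_\infty^K \otimes \star_{x=1}^N \ell_\infty^K$ forces unique norm on $C^*(\FF_N) \otimes C^*(\FF_N)$.

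The main obstacle I anticipate is the converse direction, specifically justifying that the equality of tensor norms genuinely descends from the finite-order approximants $\ZZ_K^{*N}$ to the free group $\FF_N$. Functoriality under quotients is clean for the forward direction, but for the converse one needs an \emph{approximation} statement: that an element $x \in C^*(\FF_N) \otimes C^*(\FF_N)$ can be approximated in both the $\omin$ and $\omax$ norms by elements coming from $\star_{x=1}^N \ell_\infty^K \otimes \star_{x=1}^N \ell_\infty^K$ for $K$ large, with control on the norms. This requires either an explicit inductive-limit structure $\overline{\bigcup_K} \; \star_{x=1}^N \ell_\infty^K = C^*(\FF_N)$ respecting both tensor norms, or a lifting argument using the local lifting property of $C^*(\FF_N)$. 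I would expect to invoke the separable/residually-finite-dimensional structure of these free products and the continuity of $\min$ and $\max$ under the relevant inductive limits to close this gap; verifying that the limit of the diagonal $\min = \max$ equalities is again $\min = \max$ (rather than a strict inequality appearing in the limit) is the delicate technical point to secure.
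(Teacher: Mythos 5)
Your proof has a genuine gap in each direction, and the most serious one is in the direction you consider ``clean.'' For the implication (Kirchberg $\Rightarrow$ \eqref{eqNM}) you rely on the principle that a quotient inherits the unique--self--tensor--norm property: ``since $\min\le\max$ always and the two agree on $C^*(\FF_N)\otimes C^*(\FF_N)$, the induced norms on the quotient must also agree.'' This principle is false. The max norm is projective (it passes to quotients), but the min norm is injective and \emph{not} projective: for $y=(q\otimes q)(x)$ one only gets $\|y\|_{\max}=\inf_x\|x\|_{\max}=\inf_x\|x\|_{\min}\ge\|y\|_{\min}$, which is the trivial inequality; to reverse it you would need every element downstairs to admit lifts whose min norm is controlled, and nothing in your argument provides such lifts. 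Concretely, $C^*_\lambda(\FF_2)$ is a quotient of $C^*(\FF_2)$, yet $C^*_\lambda(\FF_2)\otimes_{\min}C^*_\lambda(\FF_2)\neq C^*_\lambda(\FF_2)\otimes_{\max}C^*_\lambda(\FF_2)$ holds \emph{unconditionally} (the pair of left and right regular representations is max-continuous but not min-continuous); so your principle, applied under the very hypothesis of this direction, would ``refute'' Kirchberg's conjecture. The missing ingredient is precisely the lifting property: the paper's proof uses that $\star_{x=1}^N\ell_\infty^K$ has the LLP (Pisier) and then invokes Pisier's result that the QWEP conjecture implies $A\otimes_{\min}A=A\otimes_{\max}A$ for every C*-algebra $A$ with the LLP. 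The LLP supplies exactly the local ucp liftings that let one pull the min norm back through the quotient; quotient functoriality alone cannot.

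For the converse (\eqref{eqNM} $\Rightarrow$ Kirchberg) your proposed devices do not exist as stated. There is no inductive-limit structure $\overline{\bigcup_K}\,\star_{x=1}^N\ell_\infty^K=C^*(\FF_N)$: the natural connecting group homomorphisms go the wrong way (they are quotients $\FF_N\to\ZZ_K^{*N}$, not embeddings), and the inductive limit of the groups $\ZZ_K$ along divisibility embeddings is $\mathbb{Q}/\ZZ$, not $\ZZ$. The ultraproduct variant also does not close: neither $\otimes_{\min}$ nor $\otimes_{\max}$ commutes with ultraproducts, and since max is not injective, embedding $C^*(\FF_N)$ into an ultraproduct of the $\star_{x=1}^N\ell_\infty^K$ gives no control of the max norm on the subalgebra --- this is exactly the ``delicate point'' you flag and leave unresolved. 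The paper handles this direction by an approximation argument at the level of maps rather than algebras: writing $C^*(\FF_N)=\ast_N C(\mathbb{T})$, using nuclearity of $C(\mathbb{T})$ to produce ucp maps $\Psi_\lambda=v_\lambda u_\lambda$ factoring through $\ell_\infty^{m_\lambda}$ and converging pointwise to the identity, extending them to free products (Boca's theorem), and then transferring \eqref{eqNM} through the chain $\|(\ast_N\Psi_\lambda\otimes\ast_N\Psi_\lambda)(z)\|_{\max}\le\|(\ast_Nu_\lambda\otimes\ast_Nu_\lambda)(z)\|_{\max}=\|\cdot\|_{\min}\le\|z\|_{\min}$. (Your spectral-truncation idea can in fact be made rigorous --- truncations lie in the von Neumann algebra generated by the operators, so they preserve commutation of representations --- but it must be carried out for commuting representations directly, not via ultraproducts of algebras, and you did not do this.) Note finally that you have the asymmetry of difficulty backwards: the quotient direction is the one requiring the deep input (Kirchberg--Pisier plus LLP), while the approximation direction is the one that can be done by hand.
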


\begin{proof} First, we show that \eqref{eqNM} implies \[ C^*(\FF_N)\omin  C^*(\FF_N) =  C^*(\FF_N)\omax  C^*(\FF_N) \, .\]
   Indeed, we observe that $C^*(\FF_N)=\ast_{N} C(\mathbb{T})$. However, $C(\mathbb{T})$ is a commutative C*-algebra and it admits a net $\Psi_{\lambda}$ of completely positive unital maps converging pointwise to the identity. Each of them admits a factorization $\Psi_{\la}=v_{\la}u_{\la}$. Here $u_{\la}:C(\mathbb{T})\to \ell_{\infty}^{m_\lambda}$ and $v_{\la}:\ell_{\infty}^{m_{\la}}\to C(\mathbb{T})$ are both completely positive unital maps (see \cite{Pisier_book}). Recall that unital completely positive maps extend to free products (see \cite{MR1038498,MR1111181}). Therefore we deduce from our assumption that for every $z\in C^*(\FF_N)\otimes C^*(\FF_N)$ and $\lambda$ we have
    \begin{align*}
        &\|(\ast_N \Psi_{\la}\otimes \ast_N \Psi_{\la})(z)\|_{C^*(\FF_N) \otimes_{\max} C^*(\FF_N)}\\
        &\le \|((\ast_N u_{\la}) \otimes (\ast_N u_{\la}))(z)\|_{\star_{x=1}^N\ell_\infty^{m_\lambda} \otimes_{\max} \star_{x=1}^N\ell_\infty^{m_\lambda}}
        \\&=  \|((\ast_N u_{\la}) \otimes (\ast_N u_{\la}))(z)\|_{\star_{x=1}^N\ell_\infty^{m_\lambda} \otimes_{\min} \star_{x=1}^N\ell_\infty^{m_\lambda}}
        \\
        &\leq \|z\|_{C^*(\FF_N) \otimes_{\min} C^*(\FF_N)}.
    \end{align*}
It is easy to see that for tensors $z=\sum_k a_k\otimes b_k$ we have
norm convergence along the net. By density we obtain the assertion.

For the converse implication, recall that a unital C*-algebra $A$ has the Local Lifting Property (LLP) if for every C*-algebra $B$, any (closed two-side) ideal $I\subset B$, any ucp map $u:A\rightarrow B/I$ and any finite dimensional subspace $E\subset A$, there is a complete contraction $\tilde{u}:E\rightarrow B$ that lifts $u|_E:E\rightarrow B/I$. Note that $\ell_\infty$ has the LLP. Furthermore, according to \cite{Pisier} $\star_{x=1}^N\ell_\infty^K$ has the LLP. Now, according to \cite[Proposition 16.13]{Pisier_book}, the QWEP conjecture implies that $A\otimes_{min} A= A\otimes_{max} A$ for every C*-algebra having the LLP. So we conclude the proof.
\end{proof}

The key point in the proof of Theorem \ref{equivalence} is Kasparov's dilation Theorem. We refer to \cite[Chapter 6]{Lance_book} for all missing details. Given a C*-algebra $B$, let us start with the $C^*$-module
\begin{align*}
H_B=\{(x_n)_n: x_n\in B, \sum_nx_n^*x_n \text{   converges in   } B\}.
\end{align*}Note that $H_B$ is a $B$ right module with respect to $(x_n)b=(x_nb)_n$.

A right module map $T:H_B\rightarrow H_B$ is called \emph{adjointable} if there exists a linear map $S:H_B\rightarrow H_B$ such that
\begin{align*}
\langle \xi_1, T(\xi_2)\rangle=\langle S(\xi_1), \xi_2\rangle \text{  for every  } \xi_1,\xi_2\in H_B,
\end{align*}where $\langle (x_n)_n,(y_n)_n\rangle= \sum_nx_n^*y_n$. Then,
\begin{align*}
\mathcal L(H_B):=\{T:H_B\rightarrow H_B| T \text{  adjointable right module map  }\}
\end{align*} is again a C*-algebra. We refer to \cite{Lance_book} for $\mathcal L(H_B)=M(K\otimes B)$, where $M(A)$ is the \emph{multiplier algebra} of a (non- unital) C*-algebra $A$ and $K=K(\ell_2)$ denotes the space of compact operators on $\ell_2$. If we assume $A\subseteq B(H)$ for some Hilbert space, then $T\in B(H)$ belongs to $M(A)$  if $Ta\in A$ and $aT\in A$ for all $a\in A$.

In our particular situation, the C*-algebra $B$ will be unital. Then, it is easy to see that every $T\in M(K\otimes B)$, is represented by a matrix $T=[T_{i,j}]$ with coefficients $T_{i,j}\in B$. It follows, in particular, that
\begin{align}\label{bicommutant}
M(K\otimes B)\subseteq B(\ell_2)\overline{\otimes} B'',
\end{align}where $\overline{\otimes}$ denotes the von Neumann tensor product and $B''$ the bicommutant of the C*-algebra $B$.

\begin{thm}\cite[Kasparov's dilation Theorem]{Lance_book}\label{Kasparov}
Let $A$ and $B$ be separable unital $C^*$- algebras and let $\rho:A\rightarrow B$ ucp. Then, there exists a $*$-homomorphism $\pi:A\rightarrow M(K\otimes B)$ such that $\rho(a)=\pi(a)_{11}=e_{11}\pi(a)e_{11}$ for every $a\in A$.
\end{thm}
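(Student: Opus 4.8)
The plan is to run the Hilbert C*-module analogue of the Stinespring--GNS construction on $\rho$, producing a $*$-representation on an auxiliary countably generated module, and then to absorb that module into the standard module $H_B$ via Kasparov's stabilization theorem, so that the resulting dilation lands in $\mathcal L(H_B)=M(K\otimes B)$ with $\rho$ recovered from the $(1,1)$-corner.

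First I would build the module. On the algebraic tensor product $A\otimes B$ (with $B$ viewed as a right $B$-module) define the $B$-valued sesquilinear form
\[ \langle a_1\otimes b_1, a_2\otimes b_2\rangle = b_1^*\,\rho(a_1^*a_2)\,b_2 \, . \]
Complete positivity of $\rho$ is exactly what makes this form positive semidefinite: for a finite sum $\xi=\sum_i a_i\otimes b_i$ one has $\langle \xi,\xi\rangle=\sum_{i,j} b_i^*\rho(a_i^*a_j)b_j$, which is positive because the matrix $[\rho(a_i^*a_j)]$ is positive in $M_n(B)$. Quotienting by the null submodule $\{\xi:\langle\xi,\xi\rangle=0\}$ and completing in the norm $\|\xi\|=\|\langle\xi,\xi\rangle\|_B^{1/2}$ yields a Hilbert $B$-module $E_\rho$. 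Left multiplication $\pi_0(a)(a'\otimes b)=(aa')\otimes b$ is then checked to be adjointable with $\pi_0(a)^*=\pi_0(a^*)$ and $\|\pi_0(a)\|\le\|a\|$, the bound again coming from complete positivity via $[a_i^*a^*a\,a_j]\le\|a\|^2[a_i^*a_j]$ in $M_n(A)$; hence $\pi_0:A\to\mathcal L(E_\rho)$ is a $*$-homomorphism. Because $A$ is unital I single out the vector $\xi_0=\Id\otimes\Id$ and the map $V:B\to E_\rho$, $V(b)=\xi_0 b$; since $\langle\xi_0,\xi_0\rangle=\rho(\Id)=\Id$, the operator $V$ is an adjointable isometry, and a direct computation gives $V^*\pi_0(a)V=\rho(a)$.

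Next I would globalize into $H_B$. Separability of $A$ makes $E_\rho$ countably generated (e.g.\ by $\{a_n\otimes\Id\}$ for a dense sequence $a_n\in A$), and since $V^*V=\Id$, the projection $VV^*\in\mathcal L(E_\rho)$ splits $E_\rho\cong B\oplus F$ with $F=(\Id-VV^*)E_\rho$ countably generated. Kasparov's stabilization theorem gives $F\oplus H_B\cong H_B$, whence
\[ E_\rho\oplus H_B\;\cong\; B\oplus F\oplus H_B\;\cong\; B\oplus H_B\;\cong\; H_B \, , \]
and — this is the point — the isomorphism $W:E_\rho\oplus H_B\to H_B$ can be arranged so as to carry the distinguished summand $V(B)$ onto the first coordinate $B\subset H_B$, i.e.\ onto the range of $e_{11}$. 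Setting $\pi(a)=W(\pi_0(a)\oplus 0)W^*$ defines a $*$-homomorphism $\pi:A\to\mathcal L(H_B)=M(K\otimes B)$, and compressing to the corner gives $e_{11}\pi(a)e_{11}=W(VV^*\pi_0(a)VV^*\oplus 0)W^*$, which under the identification $V(B)\cong B$ is precisely $V^*\pi_0(a)V=\rho(a)$. This is the claimed identity $\rho(a)=\pi(a)_{11}$.

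The routine part is the module GNS construction; the genuinely nontrivial input is Kasparov's stabilization (absorption) theorem, which is what lets me replace the abstract module $E_\rho$ by the concrete standard module $H_B$ and thereby promote the $\mathcal L(E_\rho)$-level dilation to an honest matrix picture in $M(K\otimes B)$. The main point requiring care is organizing the absorption isomorphism $W$ so that the Stinespring isometry $V$ is identified with the first basis coordinate, so that ``compress by $V$'' becomes literally ``take the $(1,1)$ entry''; everything else reduces to the standard verification that complete positivity yields a positive $B$-valued form and bounded left multiplications. Since stabilization is quoted from \cite{Lance_book}, the argument is essentially self-contained modulo that reference.
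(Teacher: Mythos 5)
Your argument is correct, but it takes a genuinely different (and more self-contained) route than the paper's. The paper does not redo the dilation construction at all: it quotes \cite[Theorem 6.5]{Lance_book} as a black box and only supplies bookkeeping, namely composing $\rho$ with the canonical embedding $\sigma\colon B\hookrightarrow\mathcal L(H_B)$, $\sigma(b)=1\otimes b$, applying the cited theorem to $\sigma\circ\rho$ to obtain a $*$-homomorphism into $M_2(\mathcal L(H_B))$ whose $(1,1)$-corner is $1\otimes\rho(a)$, and then invoking the identification $M_2(\mathcal L(H_B))\simeq M(K\otimes B)$ to turn that corner identity into the stated one. You instead unpack what is inside the black box: you rebuild the KSGNS module $E_\rho$ (positivity of the $B$-valued form from complete positivity, boundedness of left multiplication, the isometry $V$ with $V^*\pi_0(a)V=\rho(a)$), and quote from \cite{Lance_book} only the stabilization theorem, applied to $F=(1-VV^*)E_\rho$ so that the distinguished copy $V(B)$ survives untouched. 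That arrangement --- stabilize only $F\oplus H_B$ and then use the shift $B\oplus H_B\cong H_B$ to place the $B$ summand in the first coordinate --- is exactly what makes the $(1,1)$-entry claim literal, and your corner computation $e_{11}\pi(a)e_{11}=W\bigl(V\rho(a)V^*\oplus 0\bigr)W^*\cong\rho(a)$ is sound. What each approach buys: the paper's is shorter given the reference, while yours makes visible why the compression can be taken at a single matrix corner and is essentially a proof of Lance's Theorem 6.5 in the special case needed. One minor remark: your $\pi(a)=W(\pi_0(a)\oplus 0)W^*$ is not unital; the statement only demands a $*$-homomorphism, so this is fine, but since the paper later uses unital $*$-representations (in the proof of Proposition \ref{embedNSGfree}), note that unitality can be restored by padding with any unital $*$-homomorphism $\theta\colon A\to\mathcal L(H_B)$ (for instance $\theta_0(\cdot)\otimes 1$ with $\theta_0$ a faithful unital representation of $A$ on $\ell_2$) in place of $0$, without affecting the corner.
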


Note that this statement is not exactly \cite[Theorem 6.5]{Lance_book}. However, it can be easily obtained from it. Indeed, for any $\rho:A\rightarrow B$ ucp, we can consider $\sigma\circ \rho:A\rightarrow \mathcal L(H_B)$, where $\sigma: B\hookrightarrow \mathcal L(H_B)$ is the canonical embedding defined by $\sigma(b)=1\otimes b$. Then, according to \cite[Theorem 6.5]{Lance_book} we obtain an $*$-homomorphism $\pi:A\rightarrow M_2(\mathcal L(H_B))$ such that $1\otimes \rho(a)=\pi(a)_{11}=e_{1}\pi(a) e_{1}$. By the explanation above, we can see $\rho(a)=(e_{1}\otimes f_{1})\pi(a)(e_{1}\otimes f_{1})$ where here $(f_{n})_n$ denotes the canonical basis of $\ell_2$. Furthermore, as it is explained in \cite[pag. 65]{Lance_book}, there is a canonical identification $M_2(\mathcal L(H_B))\simeq M(K\otimes B)$. Therefore, we can see $\pi:A\rightarrow M(K\otimes B)$ and write $(\pi(a))_{11}:=e_{11}\pi(a)e_{11}$, where we denote $e_{11}\otimes e_{11}$ the rank one projection obtained from $(e_{1}\otimes f_{1})\otimes (e_{1}\otimes f_{1})$ by the previous identification.

\

With this at hand, we can prove the following proposition, which is a crucial point in this work.

\begin{prop}\label{embedNSGfree}
   The space $NSG(N,K)^*$ embeds completely isometrically into the $N$-fold free product of $\ell_\infty^K$ with itself via the map
   \begin{align}\label{eqembedNSG1}
       \iota: NSG(N,K)^* \hookrightarrow \star_{x=1}^N\ell_\infty^K,
   \end{align}defined by $\iota(e_{x,a})=\pi_x(e_a)$ for every $x,a$. Here, $\pi_i:\ell_\infty^K\hookrightarrow *_{x=1}^N\ell_\infty^K$ denotes the natural embedding in the position $x=i$.

   Furthermore, the minimal (resp. the maximal commuting) tensor product of $NSG(N,K)^*$ with itself embeds completely isometrically into the minimal (resp. the maximal) C*-tensor product of $*_{x=1}^N\ell_\infty^K$ with itself via the map
   \begin{align}\label{eqembedNSG2}
       \iota\otimes \iota: NSG(N,K)^* \omin NSG(N,K)^* \hookrightarrow \star_{x=1}^N\ell_\infty^K \omin \star_{x=1}^N\ell_\infty^K, \, \\
       \iota\otimes \iota: NSG(N,K)^* \ocmax NSG(N,K)^* \hookrightarrow \star_{x=1}^N\ell_\infty^K \omax \star_{x=1}^N\ell_\infty^K\;.
   \end{align}
\end{prop}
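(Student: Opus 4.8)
The plan is to verify the three assertions by matching, on both sides, the cones that define the operator system structures, and to do this through dilation: the plain embedding \eqref{eqembedNSG1} and the minimal tensor product will follow already from Stinespring's theorem, while the commuting tensor product is the substantial point and is exactly where Kasparov's Theorem~\ref{Kasparov} enters. First I would record that $\iota$ is well defined and unital: for $(\la_{x,a})\in NSG(N,K)^{\perp}$ one has $\la_{x,a}=\la_x$ independent of $a$ with $\sum_x\la_x=0$, so $\sum_{x,a}\la_{x,a}\pi_x(e_a)=\sum_x\la_x\pi_x(\sum_a e_a)=\sum_x\la_x\,\idty=0$, and the same computation shows $\sum_a\iota(e_{x,a})=\idty$, so the image is a unital self-adjoint subspace, i.e. an operator system. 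By \eqref{nsg} and the embedding $\eta$, the structure on $NSG(N,K)^*$ is precisely the one detected by all families $(\Phi_x)_{x=1}^N$ of ucp maps $\Phi_x:\ell_\infty^K\to\BB(\HH)$; by the universal property of the free product such a family is the same datum as a single ucp map $\Phi:\star_{x=1}^N\ell_\infty^K\to\BB(\HH)$ with $\Phi\circ\pi_x=\Phi_x$, so that $\Phi\circ\iota=u$. Since every ucp map on the C*-algebra $\star_{x=1}^N\ell_\infty^K$ is, by Stinespring, a compression $V^*\rho(\cdot)V$ of a $*$-representation $\rho$, and compressions preserve positivity, a matrix over $\iota(NSG(N,K)^*)$ is positive in $M_n(\star_{x=1}^N\ell_\infty^K)$ (equivalently, positive under all representations) if and only if it is positive under all families $(\Phi_x)$. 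Hence $\iota$ is a unital complete order embedding, and being unital it is a complete isometry.

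The minimal tensor product is the same argument carried out on two legs. The $\omin$-positivity of $t$ is detected by pairs $(\Phi_A,\Phi_B)$ of ucp maps on \emph{different} Hilbert spaces, while positivity of $(\iota\otimes\iota)(t)$ in $\star_{x}\ell_\infty^K\omin\star_{x}\ell_\infty^K$ is detected by pairs of representations on different spaces. Writing $\Phi_A=V_A^*\rho_AV_A$ and $\Phi_B=V_B^*\rho_BV_B$ by Stinespring, one has $\Phi_A\otimes\Phi_B=(V_A\otimes V_B)^*(\rho_A\otimes\rho_B)(V_A\otimes V_B)$ on $\BB(\HH_A\otimes\HH_B)$; since this is a genuine tensor no cross terms occur, compression preserves positivity, and both implications follow at once in their $M_n$-amplified form. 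Thus $\iota\otimes\iota$ is a complete isometry onto its image for $\omin$, giving the first line of \eqref{eqembedNSG2}.

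For the commuting tensor product one inclusion is immediate: a pair of commuting representations $\sigma_A,\sigma_B$ of $\star_{x}\ell_\infty^K$ restricts along $\iota$ to a commuting pair of ucp maps on $NSG(N,K)^*$, so $\ocmax$-positivity of $t$ forces positivity of $(\iota\otimes\iota)(t)$ in $\star_{x}\ell_\infty^K\omax\star_{x}\ell_\infty^K$. For the converse I would begin with a commuting pair of ucp maps $\Phi_A,\Phi_B:NSG(N,K)^*\to\BB(\HH)$ and set $\NN=(\Phi_B(NSG(N,K)^*))'$ and $\MM=\NN'=(\Phi_B(NSG(N,K)^*))''$, two commuting von Neumann algebras with $\Phi_A(NSG(N,K)^*)\subseteq\NN$ and $\Phi_B(NSG(N,K)^*)\subseteq\MM$. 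The crucial manoeuvre is to extend $\Phi_A$ and $\Phi_B$ to the free product \emph{without leaving these commutants}: since $\Phi_A(e_{x,a})\in\NN$ and $\Phi_B(e_{x,a})\in\MM$, the fact that ucp maps extend to free products \cite{MR1038498,MR1111181} produces ucp maps $\widehat\Phi_A:\star_{x}\ell_\infty^K\to\NN$ and $\widehat\Phi_B:\star_{x}\ell_\infty^K\to\MM$ extending them. It is essential here to use this free-product extension rather than Arveson's theorem, which would not respect the commutant. Passing to separable unital C*-subalgebras $\NN_0\subseteq\NN$, $\MM_0\subseteq\MM$ containing the (finite dimensional) ranges, Theorem~\ref{Kasparov} dilates these to $*$-homomorphisms $\pi_A:\star_{x}\ell_\infty^K\to M(K\otimes\NN_0)$ and $\pi_B:\star_{x}\ell_\infty^K\to M(K\otimes\MM_0)$ with $\widehat\Phi_A(\cdot)=\pi_A(\cdot)_{11}$ and $\widehat\Phi_B(\cdot)=\pi_B(\cdot)_{11}$.

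Finally I would place the two dilations on \emph{separate} matrix legs. By \eqref{bicommutant} one has $\pi_A(\star_{x}\ell_\infty^K)\subseteq\BB(\ell_2)\,\overline{\otimes}\,\NN$ and $\pi_B(\star_{x}\ell_\infty^K)\subseteq\BB(\ell_2)\,\overline{\otimes}\,\MM$; realizing them as $\pi_A\otimes 1$ and $1\otimes\pi_B$ on $\ell_2\otimes\ell_2\otimes\HH$, with the two $\ell_2$-legs distinct and the common $\HH$-leg carrying the commuting algebras $\NN,\MM$, the ranges of $\pi_A$ and $\pi_B$ commute, so $\pi_A\cdot\pi_B$ is a $*$-homomorphism on $\star_{x}\ell_\infty^K\omax\star_{x}\ell_\infty^K$ (after compressing by the projections $\pi_A(1),\pi_B(1)$, which dominate the corner $e_{11}$ and make the representations unital there). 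Compressing by $P=e_{11}\otimes e_{11}\otimes\idty$ and using that the matrix legs lie on separate tensor factors kills all off-diagonal cross terms, leaving $P\,(\pi_A\cdot\pi_B)\big((\iota\otimes\iota)(t)\big)\,P=\sum_i\Phi_A(u_i)\Phi_B(v_i)=(\Phi_A\cdot\Phi_B)(t)$ for $t=\sum_i u_i\otimes v_i$. Since $(\iota\otimes\iota)(t)\ge 0$ its image under $\pi_A\cdot\pi_B$ is positive, and compression preserves positivity, whence $(\Phi_A\cdot\Phi_B)(t)\ge 0$; running the argument with $\HH$ replaced by $\HH\otimes\ell_2^n$ upgrades this to the $M_n$ level, so $\iota\otimes\iota$ is a unital complete order embedding and hence the complete isometry in the second line of \eqref{eqembedNSG2}. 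The main obstacle is precisely this last step: an ordinary Stinespring dilation into some $\BB(\widehat\HH)$ would destroy either the commutation of the two dilations or the multiplicativity of the corner map (the cross terms $P\pi_A(\iota u)(1-P)\pi_B(\iota v)P$ would survive), and it is the dilation into the multiplier algebra together with the bicommutant containment \eqref{bicommutant} that simultaneously keeps each dilation commuting with the opposite factor and confines it to a separate leg, so that the corner recovers the product.
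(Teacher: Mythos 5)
Your proposal reaches the same core construction as the paper for the crucial commuting case --- two dilations confined to separate $B(\ell_2)$ legs over a common $\HH$-leg carrying commuting von Neumann algebras, glued into commuting representations of the free product and then compressed by the corner $e_{11}\otimes e_{11}$ --- but it gets there by a genuinely different route. Where the paper applies Kasparov's Theorem~\ref{Kasparov} factor-by-factor to the maps $\Phi_x:\ell_\infty^K\to\BB(\HH)$ and then combines the resulting $*$-homomorphisms $\pi_x$ using the (legitimate, $*$-homomorphism) universal property of $\star_{x=1}^N\ell_\infty^K$, you first invoke the free-product extension theorem for ucp maps (Boca, \cite{MR1038498,MR1111181}) to produce ucp maps $\widehat\Phi_A,\widehat\Phi_B$ on the whole free product with ranges inside the commuting algebras $\NN,\MM$, and only then apply Kasparov once, at the free-product level. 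This does work: Boca's theorem holds with an arbitrary unital C*-algebra as target, so the range control you need is available, and your commutant bookkeeping ($\NN=\Phi_B(NSG(N,K)^*)'$, $\MM=\NN'$) is a clean substitute for the paper's $N_A=A''$, $N_B=B''$. Similarly, your direct Stinespring argument for $\omin$ replaces the paper's one-line appeal to injectivity of the minimal tensor product. The trade-off is self-containedness: the paper's Kasparov-per-factor argument is in effect a proof of the very instance of Boca's theorem that you cite as a black box.

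The one point you must repair is the justification in your first paragraph: you assert that a family $(\Phi_x)_{x=1}^N$ of ucp maps into a common $\BB(\HH)$ ``is the same datum as a single ucp map $\Phi:\star_{x=1}^N\ell_\infty^K\to\BB(\HH)$ with $\Phi\circ\pi_x=\Phi_x$'' \emph{by the universal property of the free product}. That is false as a justification: the universal property of $\star_{x=1}^N\ell_\infty^K$ concerns unital $*$-homomorphisms only, and there is no categorical reason why ucp maps extend. The existence of such an extension is precisely Boca's theorem (equivalently, the content the paper manufactures with Kasparov), and it is exactly the nontrivial direction of \eqref{eqembedNSG1}; note that one cannot cheaply substitute Arveson extension here, since that would require already knowing that $\iota$ is a complete order embedding --- circular. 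So your claim that the plain embedding and the $\omin$ case ``follow already from Stinespring'' is misleading: Stinespring plus an unproved extension claim is what is written, and the extension claim is the heart of the matter. Since you cite Boca correctly later for the $\ocmax$ case, the fix is a one-line change (cite \cite{MR1038498,MR1111181} at this step as well). A smaller slip in the $\ocmax$ part: the separable algebra $\NN_0$ must contain the full range of $\widehat\Phi_A$ (which is separable because the domain is, but not finite dimensional), not merely the finite-dimensional ranges of the $\Phi_x$; otherwise Kasparov does not apply to $\widehat\Phi_A$ as a map into $\NN_0$.
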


With this at hand, and using Lemma \ref{freegroup}, we immediately deduce the implication $a) \Rightarrow b)$ in Theorem \ref{equivalence}.

\begin{proof}
   It is very easy to see that the map $\iota: NSG(N,K)^* \hookrightarrow \star_{x=1}^N\ell_\infty^K$ is well defined. Actually, by the very definition of $NSG(N,K)^*$ it follows that $\iota$ is completely positive and unital. Thus, it is a completely contraction. Therefore, in order to prove \eqref{eqembedNSG1} it suffices to show that each completely positive unital map $\Phi$ from $NSG(N,K)^*$ to $\BB(\HH)$ extends to a completely positive map from $\star_{x=1}^N\ell_\infty^K$ into $\BB(\HH)$. By the definition of the operator system $NSG(N,K)^*$, each $\Phi$ is given by a set of $N$ completely positive maps $\Phi_x:  \ell^K_\infty \to \BB(\HH)$. Then, if we consider the separable C*-algebra $A\subset\BB(\HH)$ generated by the $\Phi_x(e_a)$'s, we can apply Theorem \ref{Kasparov} for every $x$ to get a set of unital $*$-representations
\begin{align*}
\pi_x:  \ell^K_\infty \to M(K\otimes A)
\end{align*}such that $\Phi_x(e_a) = (\pi_x(e_a))_{11}=e_{11}\pi_x(e_a)e_{11}$.

   Let $\pi$ be the $*$-representation of $\star_{x=1}^N\ell_\infty^K$ which restricts to $\pi_x$ if we only consider the $x$-th copy of $\ell^K_\infty$. Then the fact that the projection $e_{11}\otimes e_{11}$ is independent of $x$, guarantees that the map
   \[\tilde{\Phi}(\cdot) \, = \, e_{11} \pi(\cdot) e_{11} : \star_{x=1}^N\ell_\infty^K \to \BB(\HH)\]
   is a completely positive unital extension of $\Phi$.

To prove the second part of the proposition, note that the first embedding follows from the injectivity of the minimal tensor product, see for example \cite{Paulsen}. On the other hand, the second inclusion requires a more careful treatment. Consider a pair $\Upsilon=((\Phi_x)_x,(\Psi_y)_y)$ in the state space $S_H(N,K)$ such that
   \[ [\Phi_x(e_a),\Psi_y(e_b)] \lel 0 \] holds for all $x,a$, $y,b$.

Let $A$ be the separable $C^*$-algebra generated by the $\Phi_x(e_a)$'s and $B$ be the separable C*-algebra generated by the $\Psi_y(e_b)$'s. Since all these elements are selfadjoint we deduce that still $[a,b]=0$ holds for elements $a\in A'':=N_A\subseteq \BB(\HH)$, $b\in B'':=N_B \subseteq \BB(\HH)$.

For fixed $x$ we apply again Theorem \ref{Kasparov} and find a representation
   \[ \pi_x: \ell_{\infty}^K \to M(K\otimes A) \]
   such that $\Phi_x(e_a)\lel e_{11}\pi_x(e_a)e_{11}$ for every $a$. According to (\ref{bicommutant}), $$M(K\otimes  A) \subset B(\ell_2)\bar{\otimes} N_A$$
   holds for the von Neumann algebra tensor product. Now call
   \[ \tilde{\pi}_x: \ell_{\infty}^K\to B(\ell_2)\bar{\otimes}B(\ell_2)\bar{\otimes} N_A\pl ,\pl \tilde{\pi}_x(e_a) \lel 1 \otimes  \pi_x(e_a). \] We proceed analogously for every $y$ and define
   \[ \tilde{\sigma}_y: \ell_{\infty}^K\to B(\ell_2)\bar{\otimes}B(\ell_2)\bar{\otimes} N_B\pl ,\pl  \tilde{\sigma}_y(e_{b}) \lel \flip \circ (1 \otimes \pi_y(e_b))\circ \flip.  \pl \]
   Here $\flip(T\otimes  S)=S\otimes  T$ makes sure that the extra $B(\ell_2)$ part is put in the second copy.

Since elements in $N_A$, $N_B\subset \BB(\HH)$ commute, we have $[\tilde{\pi}_x(e_{a}),\tilde{\sigma}_y(e_{b})]\lel 0$ for all $x,a$, $y,b$. Therefore, we can obtain representations $$\pi_1, \pi_2:\star_{x=1}^N\ell_\infty^K\rightarrow B(\ell_2)\bar{\otimes}B(\ell_2)\bar{\otimes} \BB(\HH)$$ with commuting range. This allows us to obtain a representation
\begin{align*}
\pi_1\otimes \pi_2: \star_{x=1}^N\ell_\infty^K\otimes_{max}\star_{x=1}^N\ell_\infty^K \rightarrow B(\ell_2)\bar{\otimes}B(\ell_2)\bar{\otimes} \BB(\HH)
\end{align*}verifying $\pi_1\otimes \pi_2(x\otimes y)=\pi_1(x)\pi_2(y)$ for every $x,y\in \star_{x=1}^N\ell_\infty^K$.

Finally, by defining \[ \tilde{\Upsilon}(\cdot) \lel (e_{11}\otimes e_{11})((\pi_1\otimes \pi_2)(\cdot))(e_{11}\otimes e_{11}) \pl \]we obtain a completely positive unital map on $\star_{x=1}^N\ell_\infty^K\otimes_{max}\star_{x=1}^N\ell_\infty^K$ which extends the initial state $\Upsilon$ on $NSG(N,K)^* \omax NSG(N,K)^*$. This concludes the proof.

\end{proof}

\begin{remark} [projective measurements and POVMs]
\label{vonNeumann}
Since projective measurements correspond to families of representations $\pi_x:\ell_\infty^K\rightarrow B(H)$, we get as a consequence of Proposition \ref{embedNSGfree} that the sets of tensor and commutative quantum correlations $Q$ and $Q'$, defined in Section \ref{Tsir_prob}, are the same whether one considers only projective measurements or also POVMs.
\end{remark}

\begin{remark}
It is also interesting that we have proved a stronger result than
Proposition \ref{embedNSGfree}. Indeed, we have shown that all maps
in the proposition define isometric embeddings in the category of
operator systems. In particular, this means that the natural order
on the space $NSG(N,K)^*$ (and on the corresponding tensor product)
coincides with the natural order on the corresponding C*-algebras.
In \cite{JP} the authors showed that this order is very important in
the context of violation of Bell inequalities.
\end{remark}

We will conclude the paper by showing the implication $b)\Rightarrow
a)$ in Theorem \ref{equivalence}. Note that if the implication
$a)\Rightarrow b)$ is read as: \emph{Connes' embedding problem
implies Tsirelson's problem}, the converse implication can be
understood as: \emph{A positive answer of Tsirelson's problem when
we consider matrix coefficients implies that the QWEP conjecture is
true}. Implication $b)\Rightarrow a)$ follows from the next lemma
joint with Lemma \ref{freegroup}. The proof is based on a trick of
Pisier which can be found in \cite{Pisier}.

\begin{lem}\label{tsma} The identity map
\[ id: NSG(N,K)^* \otimes_{max} NSG(N,K)^* \to NSG(N,K)^* \otimes_{min}
NSG(N,K)^* \]
is completely isometric if and only if
\[ \ast_{i=1}^N \ell_{\infty}^K \otimes_{max} \ast_{i=1}^N
\ell_{\infty}^K
\lel \ast_{i=1}^N \ell_{\infty}^K \otimes_{min} \ast_{i=1}^N \ell_{\infty}^K \, .
\]
\end{lem}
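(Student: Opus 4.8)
The plan is to prove the two implications separately. Throughout I write $A=\star_{x=1}^N\ell_\infty^K$ and lean on Proposition \ref{embedNSGfree}, which supplies completely isometric embeddings $NSG(N,K)^*\omin NSG(N,K)^*\hookrightarrow A\omin A$ and $NSG(N,K)^*\ocmax NSG(N,K)^*\hookrightarrow A\omax A$, both induced by one and the same map $\iota\otimes\iota$ on the algebraic tensor product $NSG(N,K)^*\otimes NSG(N,K)^*$.

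The easy direction is $[A\omax A=A\omin A]\Rightarrow[\,\mathrm{id}\ \text{completely isometric}\,]$. Since $\iota\otimes\iota$ is a single linear map, for every $t\in M_m(NSG(N,K)^*\otimes NSG(N,K)^*)$ Proposition \ref{embedNSGfree} gives $\|t\|_{\ocmax}=\|(\iota\otimes\iota)(t)\|_{A\omax A}$ and $\|t\|_{\omin}=\|(\iota\otimes\iota)(t)\|_{A\omin A}$ at every matrix level $m$. If the two C*-norms on $A\otimes A$ coincide, the right-hand sides agree, hence the cmax and min norms on $NSG(N,K)^*\otimes NSG(N,K)^*$ agree at all matrix levels, which is exactly complete isometry of the identity map.

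The hard direction is $[\,\mathrm{id}\ \text{completely isometric}\,]\Rightarrow[A\omax A=A\omin A]$, where I would use Pisier's trick from \cite{Pisier}. Since $\omin\le\omax$ always, it suffices to prove $\|z\|_{A\omax A}\le\|z\|_{A\omin A}$ for all $z\in M_m(A\otimes A)$. The maximal C*-norm is computed by pairs of representations $\rho_1,\rho_2:A\to\BB(\HH)$ with commuting ranges; restricting such a pair to the generating operator system $\iota(NSG(N,K)^*)\subset A$ yields a commuting pair of ucp maps, i.e. a point of the cmax state space. Thus for elements lying in $NSG(N,K)^*\otimes NSG(N,K)^*$ the hypothesis already delivers the desired domination by the min norm. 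The real work is to upgrade this from the generators to arbitrary $z\in A\otimes A$, whose expansion involves products (reduced words) of the generators. Here the free-product structure is essential: a representation of $A$ is an unconstrained $N$-tuple of PVMs, and by Kasparov dilation (Theorem \ref{Kasparov}) the ucp maps on $NSG(N,K)^*$ are precisely the compressions of such representations, exactly as in the proof of Proposition \ref{embedNSGfree}. Pisier's trick encodes a product of generators $P_1\cdots P_\ell$ as a single corner of a product of block matrices each built from one generator, thereby realizing the pairing of $z$ against $(\rho_1\cdot\rho_2)$ as the evaluation of a single element of $M_{m'}(NSG(N,K)^*\otimes NSG(N,K)^*)$ at a larger matrix level $m'$. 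Applying the hypothesis at that level and pushing the resulting tensor-form maps back into $A\omin A$ (again via Proposition \ref{embedNSGfree}) gives $\|z\|_{A\omax A}\le\|z\|_{A\omin A}$.

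I expect the linearization of products as matrix corners to be the main obstacle: it is precisely the point at which one must pass to matrix amplifications, and hence where \emph{complete} isometry — the matrix-coefficient, i.e. matrix-valued, version of Tsirelson's problem — is indispensable, since a scalar-level isometry cannot capture words of length $\ge 2$. Once this step is secured, the lemma follows, and combining it with Lemma \ref{freegroup} yields the implication $b)\Rightarrow a)$ of Theorem \ref{equivalence}, namely that a positive matrix-valued answer to Tsirelson's problem forces the QWEP conjecture.
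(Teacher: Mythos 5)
Your easy direction is correct and coincides with the paper's (implicit) treatment: both matrix-norm structures on $NSG(N,K)^*\otimes NSG(N,K)^*$ are computed through the single map $\iota\otimes\iota$ of Proposition \ref{embedNSGfree}, so equality of the two C*-norms on $\star_{x=1}^N\ell_\infty^K\otimes\star_{x=1}^N\ell_\infty^K$ forces the $cmax$ and $min$ norms to agree at every matrix level.

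The hard direction, however, has a genuine gap, and it sits exactly where you yourself flagged an obstacle: the ``linearization of products as matrix corners''. That step cannot work as described. A product of block matrices with entries in the operator system is not an element of any $M_{m'}(NSG(N,K)^*\otimes NSG(N,K)^*)$, so the hypothesis --- which only concerns matrices of elements of the operator system --- can never be applied to it; the corner trick $a_1\cdots a_\ell=(M^\ell)_{1,\ell+1}$, $M=\sum_j a_j\otimes e_{j,j+1}$, gives only the inequality $\|a_1\cdots a_\ell\|\le\|M\|^\ell$, which destroys the exact norm identity you need. This is also not what Pisier's trick is in this context. The argument the paper uses rests on a structural fact you never invoke: $NSG(N,K)^*$ is spanned by \emph{unitaries} which generate $\star_{x=1}^N\ell_\infty^K$ as a C*-algebra. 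Indeed $\ell_\infty^K\cong C^*(\ZZ_K)$, so the unitaries $U_{x,k}=\sum_a e^{2\pi iak/K}\pi_x(e_a)$, $k=0,\dots,K-1$, span the same subspace as the projections $\pi_x(e_a)$ and generate the free product. Given this, the hypothesis (inverted, so that $id$ from $min$ to $cmax$ is a unital complete isometry, hence completely positive, since unital complete contractions between operator systems are ucp), composed with the embeddings of Proposition \ref{embedNSGfree}, yields a ucp map $T$ from the operator system $NSG(N,K)^*\omin NSG(N,K)^*$ sitting inside $\star_{x=1}^N\ell_\infty^K\omin\star_{x=1}^N\ell_\infty^K$ into $\star_{x=1}^N\ell_\infty^K\omax\star_{x=1}^N\ell_\infty^K$, carrying each spanning unitary $U_{x,k}\otimes e$ and $e\otimes U_{y,l}$ to a unitary. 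Extend $T$ by Arveson's theorem to a ucp map $\tilde T$ on the whole minimal tensor product (this, and not ``capturing words of length $\ge 2$'', is the precise place where the matrix-valued hypothesis is indispensable: it supplies complete positivity, which Arveson extension and the Schwarz inequality require). Since $\tilde T(U)$ unitary saturates the Schwarz inequality $\tilde T(U)^*\tilde T(U)\le\tilde T(U^*U)=1$, every spanning unitary lies in the multiplicative domain of $\tilde T$; the multiplicative domain is a C*-subalgebra, and these unitaries generate, so $\tilde T$ is a *-homomorphism which is the identity on the algebraic tensor product. Hence $\|z\|_{max}\le\|z\|_{min}$ for every $z$, with no word-by-word analysis needed: all products are handled automatically by multiplicativity, which is exactly what your linearization attempt was trying, and failing, to reconstruct by hand.
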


\begin{proof}
   Let $\{U_{ik}\}$ be a spanning set of unitary operators in $NSG(N,K)^*$. Then $\{U_{ik}\}$ clearly generates the C*-algebra $\star_{x=1}^N\ell_\infty^K$. We will identify the elements $U^A_{ik} \in NSG(N,K)^* \omin NSG(N,K)^*$ with the elements $U_{ik} \otimes e$ and $U^B_{ik} \in NSG(N,K)^* \omin NSG(N,K)^*$ with the elements $e \otimes U_{ik}$. Then the set $\{U^{A,B}_{ik}\}$ generates the C*-algebra $\star_{x=1}^N\ell_\infty^K \omin \star_{x=1}^N\ell_\infty^K$. Combining the assumption with the conclusion of Proposition \ref{embedNSGfree}, we get a completely positive unital map
  \begin{align*}
  T: NSG(N,K)^* \omin NSG(N,K)^* \to \star_{x=1}^N\ell_\infty^K \omax \star_{x=1}^N\ell_\infty^K.
  \end{align*}
  As a consequence of Proposition \ref{embedNSGfree}, we can assume that $T(U^{A,B}_{ik})$ is a unitary operator in $\star_{x=1}^N\ell_\infty^K \omax \star_{x=1}^N\ell_\infty^K$. Since $T$ is a completely positive and unital map, it is necessarily completely bounded. Thus, it extends to a (completely positive unital) *-representation $\hat{T}$ (see \cite{Pisier}),
  \begin{align*}
      &\hat{T} : \star_{x=1}^N\ell_\infty^K\omin \star_{x=1}^N\ell_\infty^K \to \star_{x=1}^N\ell_\infty^K \omax \star_{x=1}^N\ell_\infty^K \qedhere.
  \end{align*}
\end{proof}

Let $n\in \mathbb{N}$. Then $M_n(NSG(N,K)^* \omin NSG(N,K)^*)$ is
also an operator system, and hence it suffices to check the complete
isometry for selfadjoint elements (or equivalently it suffices to
check for positivity). This leads naturally to the matrix-valued
version of Tsirelson's problem. Indeed, a state for $M_n(NSG(N,K)^*
\omax NSG(N,K)^*)$ is given by a commuting representation
\[ [E_x^a,F_y^b]= 0 \]
on a  Hilbert space $\H$ and  a family $(\xi_i)_{i=1}^n$ with
$\sum_{i=1}^n \|\xi_i\|_H^2=1$. Then we find the matrix of
coefficients
\[ \omega_{i,j,x,y,a,b} = (\xi_i, E_x^aF_y^b\xi_j) \, .\]
Therefore the first condition in Lemma \ref{tsma} asks whether every
such matrix-valued probability $\omega$ lies in the set
\begin{align*}
 S_n^{min}(N,K) = \{ (\xi_i,E_x^a\otimes F_y^b\xi_j)&:
 \xi_i\in H_1\otimes
 H_2\, , \,  \sum_i \|\xi\|_{H_1\otimes H_2}^2=1\, ,\\
&\quad  \sum_x E_x^a=1_{H_1} \,  \sum_y F_y^b=1_{H_2}  \} \, .
\end{align*}

\section{Summary}

We have shown a close connection between Connes' embedding and Tsirelon's problem, relating in this way a major open problem in operator algebras with a basic foundational problem in quantum mechanics. The connection is yet another consequence of the use of operator space techniques in the foundations of quantum mechanics, following the steps already started in \cite{PWPVJ,JPPVW,JPPVW2,JP,ScholzWerner}.

After this work was completed, we learned that one direction (Connes $\Rightarrow$ Tsirelson) was independently obtained by \cite{Fritz}.

\section{Acknowlodgements}

This work was supported in part by Spanish grants I-MATH, MTM2008-01366, S2009/ESP-1594, the European projects QUEVADIS and CORNER, DFG grant We1240/12-1 and National Science Foundation grant DMS-0901457. V.B.S would like to thank Fabian Furrer for stimulating discussions.

\bibliographystyle{amsplain}
\bibliography{bibfile}

\end{document}